\documentclass[a4paper,UKenglish,cleveref, autoref, thm-restate]{lipics-v2021}

\usepackage{todonotes}
\usepackage{microtype}
\usepackage{complexity}

\hideLIPIcs

\title{Higher Hardness Results for the Reconfiguration of Odd Matchings}

\author{Joseph Dorfer}{Graz University of Technology, Austria}{joseph.dorfer@tugraz.at}{https://orcid.org/0009-0004-9276-7870}{Austrian Science Foundation (FWF) 10.55776/DOC183}

\authorrunning{Joseph Dorfer}

\Copyright{Joseph Dorfer}

\ccsdesc[100]{Mathematics of computing → Discrete mathematics, Combinatorics, Combinatoric problems}

\keywords{Graph Reconfiguration Problems, Flip Graphs, Polynomial Hierarchy, APX-hardness}

\nolinenumbers

\graphicspath{{./figures/}}

\acknowledgements{First, I would like to thank Lasse Wulf for introducing me to the relation between the polynomial hierarchy and diameter computation for flip graphs. Further, I would like to thank Oswin Aichholzer, Sofia Brenner, Hung Hoang, Daniel Perz, Christian Rieck, and Francesco Verciani for our previous joint work on odd matchings. Finally, I want to thank the reviewers for many helpful comments that helped improving the argumentation throughout the paper.}

\begin{document}
	
	\maketitle
	
	\begin{abstract}
		We study the reconfiguration of odd matchings of combinatorial graphs. Odd matchings are matchings that cover all but one vertex of a graph. A reconfiguration step, or flip, is an operation that matches the isolated vertex and, consequently, isolates another vertex. The flip graph of odd matchings is a graph that has all odd matchings of a graph as vertices and an edge between two vertices if their corresponding matchings can be transformed into one another via a single flip.
		
		We show that computing the diameter of the flip graph of odd matchings is $\Pi_2^p$-hard. This complements a recent result by Wulf [FOCS25] that it is~$\Pi_2^p$-hard to compute the diameter of the flip graph of perfect matchings where a flip swaps matching edges along a single cycle of unbounded size.
		
		Further, we show that computing the radius of the flip graph of odd matchings is $\Sigma_3^p$-hard. The respective decision problems for the diameter and the radius are also complete in the respective level of the polynomial hierarchy. This shows that computing the radius of the flip graph of odd matchings is provably harder than computing its diameter, unless the polynomial hierarchy collapses.
		
		Finally, we reduce \textsc{set cover} to the problem of finding shortest flip sequences. As a consequence, we show $\log$-\APX-hardness and that the problem cannot be approximated by a sublogarithmic factor. By doing so, we answer a question asked by Aichholzer, Brenner, Dorfer, Hoang, Perz, Rieck, and Verciani [GD25].

	\end{abstract}

	\newpage
	\section{Introduction}
	
	\emph{Reconfiguration} describes the process of changing one structure into another. It is often performed by small, reversible steps, so called \emph{flips}. Reconfiguration has many applications in the areas of optimization \cite{ItoDHPSUU11} or enumeration~\cite{avis1996reverse,mütze2024combinatorialgraycodesanupdated}. We refer to the following surveys for the discussion of additional applications of reconfiguration~\cite{Nishimura18,Heuvel13}. Very recently, reconfiguration has also provided substantial insight into the complexity of computing the diameter of polytopes~\cite{nobel2025complexity,sanita2018diameter}, spiking in the result that computing the combinatorial diameter of a polytope is $\Pi_2^p$-hard~\cite{wulf2025computingpolytopediameterharder}. Remarkably, all the recent results on the diameter of polytopes study the reconfiguration of perfect matchings.
	
	We study the related problem of the reconfiguration of \emph{odd matchings} of graphs. An odd matching of a graph $G$ is a matching consisting of edges of $G$ such that all vertices of $G$ are matched except for a single \emph{isolated vertex}. A \emph{flip} between two odd matchings is an operation that matches the isolated vertex $v$ of the first matching to another vertex $w$. Subsequently, the vertex that previously shared an edge with~$w$ becomes the new isolated vertex. The \emph{flip graph} of odd matchings of a graph $G$ is a graph that has as vertex set all odd matchings of $G$ and has edges between matchings whenever they can be transformed into one another via a single flip. A \emph{flip sequence} between an initial matching $M_{in}$ and a target matching $M_{tar}$ is a sequence of matchings $M_{in}=M_0$, $M_1$, $\ldots$, $M_{k-1}$,$M_{k}=M_{tar}$ such that consecutive matchings only differ by a single flip. In terms of the flip graph, a flip sequence is a path between two matchings. The index $k$ denotes the \emph{length} of a flip sequence. The \emph{flip distance} between two odd matchings $M_{in}$ and $M_{tar}$, denoted by $d(M_{in},M_{tar})$, is the minimum over all $k$ such that there exists a flip sequence of length $k$ between $M_{in}$ and $M_{tar}$. This can be interpreted as the length of a shortest path in the flip graph. The \emph{diameter} of the flip graph is defined as $\max_{M_1,M_2}~d(M_1,M_2)$ and describes the maximal flip distance between any pair of odd matchings in the flip graph. Similarly, the \emph{radius} is defined as $\min_{M_1}~\max_{M_2}~d(M_1,M_2)$ and describes the minimum maximal distance of a matching $M_2$ to a \emph{center} $M_1$. Clearly, the diameter of a flip graph is bounded from below by its radius and bounded from above by twice the radius. For an illustration of many of the concepts, we refer to Figure~\ref{fig:flipgraph}.
	
	\begin{figure}[h]{\columnwidth-10pt}
		\centering
		\includegraphics[scale=0.36]{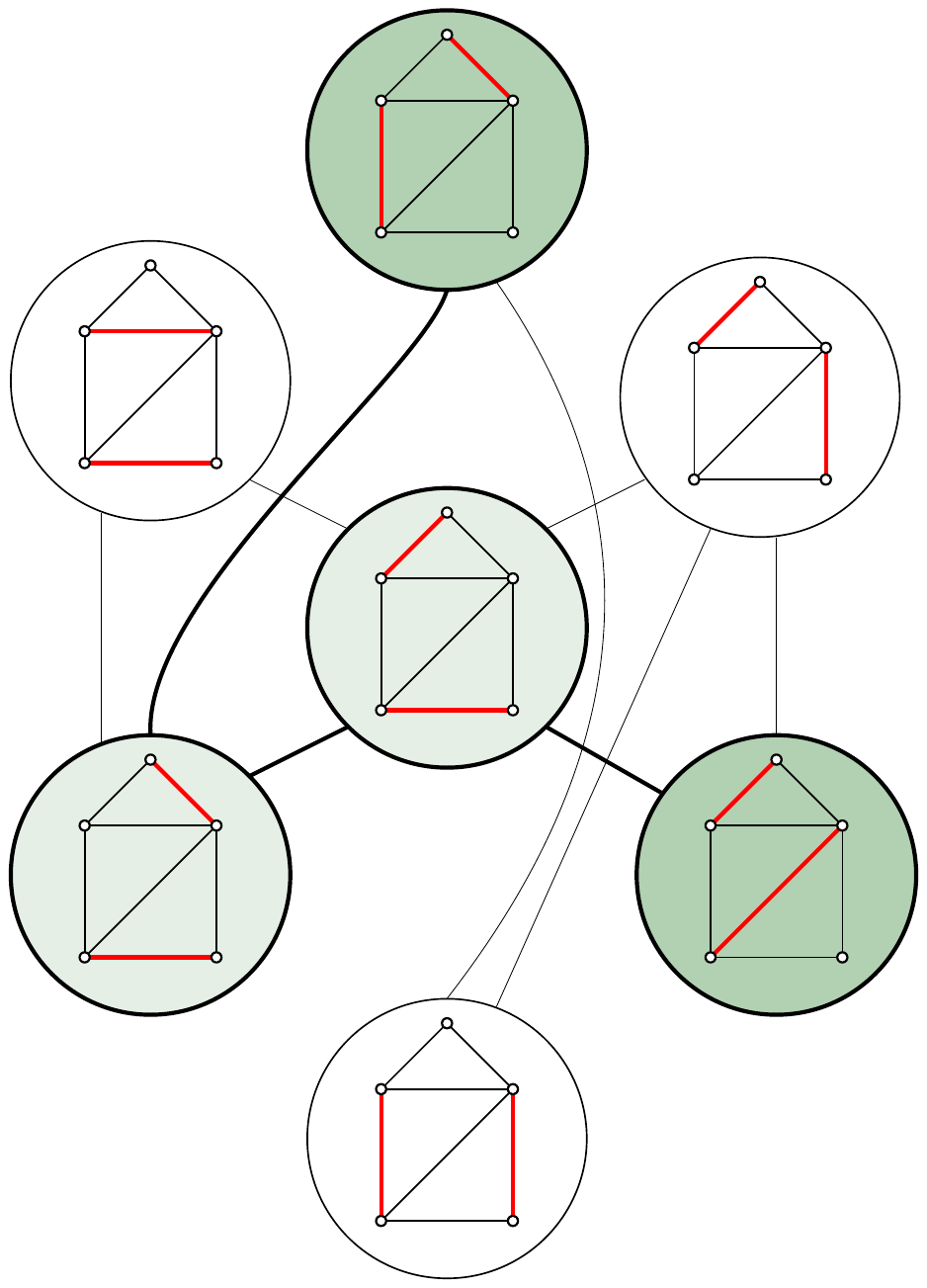}
		\caption{Illustration of a flip graph of odd matchings with a flip sequence of length $3$ highlighted.}%
		\label{fig:flipgraph}%
	\end{figure}%
	
	\subsection{Related Work}
	
	\subparagraph*{Odd Matchings} The reconfiguration of odd matchings has been introduced in~\cite{oddmatchings} for the setting of geometric odd matchings, that is, crossing-free odd matchings with straight line segments as edges between points in general position in the plane. The authors show that in this setting the flip graph is always connected with a diameter of $O(n^2)$ where $n$ is the number of points.  The study of combinatorial odd matchings has been studied in \cite{aichholzer2025flippingoddmatchingsgeometric}. The authors provide a complete, polynomial time checkable characterization when the flip graph of odd matchings on a graph $G$ is connected and show that any connected component of the flip graph has a diameter that is linear in the size of $G$. In the same paper, it is shown that the problem of deciding whether there exists a flip sequence of a certain length between two given odd matchings is \NP-complete for both the geometric and the combinatorial setting. For the combinatorial setting, the flip distance problem is already \NP-hard for solid rectangular grid graphs. The reconfiguration of odd matchings also appears in form of the sliding block game gourds~\cite{hamersma2020gourds,kakimura2024reconfiguration} where the underlying graph is a triangular grid graph and the edges have colored or labeled end points. In every flip, the added edge inherits the colors or labels of the removed edge. 
	
	\subparagraph*{Perfect Matchings} For the reconfiguration of perfect matchings, removing and adding a single edge will not yield  another perfect matching. Instead a flip in a matching $M$ works as follows: Pick a cycle that alternates between edges that lie in $M$ and edges that do not. The flip then removes all edges of $M$ along the cycle and adds the edges of the cycle that were not in $M$. We can either allow cycles of arbitrary length, or bound the length of the cycle. When bounding the length of the cycle to only allow cycles of length four such that a flip removes two edges from $M$ and adds two new edges it is shown to be \PSPACE-complete to decide whether a given matching can be flipped into another matching in the combinatorial setting \cite{bonamy2019perfectmatchingreconfigurationproblem}. In the geometric setting it is a long-standing open question whether any perfect matching on any point set can be transformed into any other matching on the same point set when only allowing flips along $4$-cycles. Up to now, there is not even a published proof that a perfect matching on any point set permits a valid flip of such a form. It further has been shown that deciding whether a flip sequence of a certain length exists is \NP-hard in the geometric setting \cite{binucci2025flippingmatchingshard}.
	
	If, however, we allow flips along alternating cycles of unbounded size, the flip graph is connected in the geometric setting~\cite{articlematchings_2}. In the combinatorial setting it has been shown that deciding whether the flip distance between two perfect matchings is at most $k$ is \NP-complete, even for $k=2$ \cite{aichholzer2021flip,ito2019shortest}. There also has been particular interest in the complexity of computing the diameter of the flip graph of perfect matchings in the combinatorial setting \cite{nobel2025complexity,sanita2018diameter,wulf2025computingpolytopediameterharder} spiking in the result that computing the diameter is $\Pi_2^p$-complete \cite{wulf2025computingpolytopediameterharder}. The research was motivated by its implications for the complexity of computing the diameter of polytopes \cite{chvatal1975certain}.
	
	\subparagraph{Computing Central Structures in Flip Graphs}
	
	This years CG:SHOP challenge deals with the search of central structures in flip graphs of triangulations under parallel flip operations~\cite{Cgshop}. The problem as it is phrased asks for a central structure in the flip graph for a small (compared to the size of the flip graph) finite set of triangulations that are part of the input. The problem is contained in \NP~and can be interpreted as an attempt to approximate the search of the center and the radius of the whole flip graph.
	
	\subsection{Our Contributions}
	
	In this paper, we provide three novel higher complexity results on the reconfiguration of combinatorial odd matchings.
	
	First, we complement the main result in \cite{wulf2025computingpolytopediameterharder} by showing that for a given graph $G$ and a parameter $k\in \mathbb{N}$ the problem of deciding whether the diameter of the flip graph of odd matchings of $G$ is at most $k$ is $\Pi_2^p$-complete. We do so by reducing directly from the $\Pi_2^p$-complete problem \textsc{$\forall\exists$-SAT}.
	
	\begin{restatable}{theorem}{PI}
		\label{thm:PI}
		Given a graph $G$ and a parameter $k\in \mathbb{N}$. Deciding whether the diameter of the flip graph of odd matchings of $G$ is at most $k$ is $\Pi_2^p$-complete.
	\end{restatable}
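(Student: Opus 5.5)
The proof has two parts: membership in $\Pi_2^p$ and hardness.

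\emph{Membership.} The statement ``diameter at most $k$'' reads: for all pairs of odd matchings $M_1,M_2$ of $G$, either they lie in different components (if the diameter is taken over connected components), or there exists a flip sequence of length at most $k$ between them. By \cite{aichholzer2025flippingoddmatchingsgeometric}, every connected component has diameter linear in $|G|$. So we may assume $k\le O(|E(G)|)$ and take the certificate to be a flip sequence of polynomial length. Connectivity within a component is itself witnessed by such a polynomial-length sequence. This gives a $\forall\exists$ formula with a polynomial-time checkable matrix, hence membership in $\Pi_2^p$.

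\emph{Hardness.} We reduce from \textsc{$\forall\exists$-SAT}: given $\forall x\,\exists y\,\varphi(x,y)$ with $\varphi$ in CNF, we construct $G$ and $k$ so that the diameter is at most $k$ iff the formula is true. The construction uses gadgets.

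\begin{itemize}
\item For each universal variable $x_i$, a gadget (for instance an even cycle with a pendant structure) admits two local matching states, $x_i$ true and $x_i$ false. Converting between them costs many flips, so any far pair of matchings is forced to disagree in a way that encodes an assignment to $x$.
\item For each existential variable $y_j$, a gadget whose state can be chosen cheaply during the flip sequence.
\item For each clause, a gadget that the isolated vertex must traverse. Traversal is cheap iff some literal gadget is in a satisfying state.
\end{itemize}

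The intended ``hardest'' pair is an initial matching encoding an assignment $\alpha$ to $x$ and a target encoding the complement, or an otherwise forced configuration. The shortest path must walk the isolated vertex through all clause gadgets. Passing a clause costs a baseline amount if it is satisfied by $\alpha$ together with the chosen $y$-states, and at least one extra flip otherwise. Setting $k$ to the baseline total, the distance for this pair is at most $k$ iff $\exists y\,\varphi(\alpha,y)$ holds.

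The main obstacle is the upper bound for \emph{all} pairs, including degenerate ones that do not correspond to a clean assignment. For those, I would pad every gadget so that every pair of matchings can be reconfigured within $k$ by a generic strategy, unless the pair is ``aligned'': each universal gadget is in a pure state in both matchings and the isolated vertices sit at designated start and end vertices. Concretely, I would add long paths, or make the universal gadget mixed states cheap, so that misaligned pairs get enough slack. This requires a careful case analysis bounding flip distances by sums of local reconfiguration costs, using the linear-diameter construction of \cite{aichholzer2025flippingoddmatchingsgeometric} as a subroutine.

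For the lower bound in the aligned case I would use a counting argument. Every edge in the symmetric difference of $M_{in}$ and $M_{tar}$ must be touched, and each flip changes exactly one edge. This gives a baseline equal to half the symmetric difference plus the extra cost forced by unsatisfied clauses. Connectedness of the flip graph is ensured by the characterization in \cite{aichholzer2025flippingoddmatchingsgeometric}, so the diameter is finite.
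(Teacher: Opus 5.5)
Your membership argument is fine and is essentially the paper's. The hardness part is a template rather than a reduction, though, and the one concrete design choice you make for universal variables cannot work.

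\textbf{The universal gadget.} Suppose a universal gadget has only two local states and the far pair ``disagrees'' on it. Then there is only one way to disagree. The union of $M_{in}$ and $M_{tar}$ on the gadget is the same set of alternating cycles whichever matching carries which state. Moreover, when such a cycle is switched in the minimal $j+1$ flips (for $j$ edges of each matching), the vertices that become isolated are exactly those at odd distance from the entry vertex along the cycle. This does not depend on which perfect matching is the initial one. So the pair cannot distinguish $x_i$ from $\bar x_i$.

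The missing idea is a gadget with two different maximal unions, whose \emph{shape} encodes the value; the shape is invariant under swapping $M_{in}$ and $M_{tar}$. The paper uses a $12$-cycle with chords $\{2,10\}$ and $\{3,11\}$. The alternating $12$-cycle is then either the plain cycle or the crossed cycle $1,2,10,9,\dots,4,3,11,12$. Since the gadget can only be entered from vertex $1$, switching it in $7$ flips isolates (among others) $4$ and $10$ in the first case and $3$ and $9$ in the second. Those are exactly where positive and negative clauses are attached. The existential choice is instead made by the flip sequence: it picks which of the two edges from $v$ (to the adjacent vertices $1$ and $12$ of a plain $12$-cycle) is used to enter. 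A clause is a $4$-cycle that can be switched in $3$ flips only while the isolated vertex sits next to it.

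\textbf{The lower bound.} The two properties you leave as goals are the actual content of the proof. Your lower bound of half the symmetric difference is too weak to make any budget tight. For the intended pair it yields $6$ per variable gadget, $2$ per clause, and nothing for bringing the isolated vertex to the gadgets. The paper needs sharper accounting:
\begin{itemize}
\item an alternating cycle with $j$ edges of each matching costs at least $j+1$ flips;
\item a happy edge costs $0$ or at least $2$;
\item the path costs $\ell$.
\end{itemize}
With $k=\ell+7(m_1+m_2)+3K$, this forces exactly $7$ flips on every variable gadget and $3$ on every clause gadget. The assignment to the $y_j$ can then be read off from the entry edges, and every clause switch certifies a true literal.

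\textbf{Restricting to aligned pairs.} This is not done by generic padding. The paper attaches to $v$ a path of length $\ell=2c(s+1)$, where $c$ is the constant of the linear diameter bound. This forces any maximizing pair to have both isolated vertices at the far end of the path. It then shows gadget by gadget (Lemma~\ref{lem:structure}) that replacing happy edges or short alternating cycles by the full alternating cycle never decreases the distance. Only because of this lemma does the YES-direction need to bound the distance just for pairs with alternating cycles on all gadgets. This case analysis is exactly what your proposal defers.
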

	
	As a second result, we study the related problem of calculating the radius of the flip graph. We show that deciding whether for a given graph $G$ the radius of the flip graph of odd matchings of $G$ is at most some value $k$ is $\Sigma_3^p$-complete. We do so by reducing directly from the $\Sigma_3^p$-complete problem \textsc{$\exists\forall\exists$-SAT}.
	
	\begin{restatable}{theorem}{SIG}
		\label{thm:SIGMA}
		Given a graph $G$ and a parameter $k\in \mathbb{N}$. Deciding whether the radius of the flip graph of odd matchings of $G$ is at most $k$ is $\Sigma_3^p$-complete.
	\end{restatable}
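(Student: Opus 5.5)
Membership and hardness are handled separately.

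\emph{Membership.} The radius is at most $k$ iff there exists an odd matching $M_1$ such that for every odd matching $M_2$ there exists a flip sequence of length at most $k$ from $M_1$ to $M_2$. We may assume $k$ is polynomially bounded. Indeed, every connected component of the flip graph has linear diameter \cite{aichholzer2025flippingoddmatchingsgeometric}. If the flip graph is disconnected, the radius is infinite, and this can be detected in polynomial time using the characterization in \cite{aichholzer2025flippingoddmatchingsgeometric}. Otherwise we may cap $k$ at the linear bound. Under this assumption all three witnesses have polynomial size, and checking a flip sequence takes polynomial time. The statement therefore has the form $\exists\forall\exists$ followed by a polynomial-time predicate, which places the problem in $\Sigma_3^p$.

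\emph{Hardness.} I will reduce from \textsc{$\exists\forall\exists$-SAT}, where the formula is $\exists x\,\forall y\,\exists z\;\varphi(x,y,z)$, and reuse the gadgetry from the proof of Theorem~\ref{thm:PI}. That reduction maps a $\forall y\exists z\,\varphi$ instance to a graph $G_\varphi$ in which far-apart pairs of matchings essentially encode assignments to $y$. The flip distance between such a pair is at most $k$ iff some $z$ satisfies $\varphi$. All other pairs are close, with distance at most $k$.

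To obtain a radius instance, I would add a layer of $x$-gadgets. Each variable $x_i$ gets a small component with two "local states" $x_i=0$ and $x_i=1$. A center $M_1$ is then meant to correspond to a choice of $x$ together with a canonical configuration of the $y,z$-part. The opponent $M_2$ chooses $y$ through its configuration of the $y$-gadgets. Reaching $M_2$ from $M_1$ forces the isolated vertex to traverse clause gadgets. A clause can be passed cheaply if it is satisfied by the $x$-state fixed in $M_1$ and left untouched, by the $y$-state imposed by $M_2$, or by a $z$-state chosen freely along the path. The intended invariant is that $d(M_1,M_2)\le k$ iff $\varphi(x,y,z)$ holds for some $z$. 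With suitably chosen weights, realized as long paths or padding, $d(M_1,M_2)$ has the form $k_0 + (\text{number of unsatisfied clauses, minimized over } z)$.

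The two key lemmas are as follows.
\begin{enumerate}
\item \emph{Completeness.} Let $x$ be a good assignment and let $M_1$ be its canonical matching. Every $M_2$ is within distance $k$. For canonical $M_2$ this follows from the clause argument above. For non-canonical $M_2$ it must follow from a direct upper-bound construction exploiting padding slack.
\item \emph{Soundness.} Every center candidate $M_1$ has some $M_2$ at distance greater than $k$. If $M_1$ is canonical with a bad $x$, take the $y$ witnessing badness. If $M_1$ is non-canonical, I need an $M_2$ that is far away. The natural choice is a matching that "mirrors" $M_1$ on a long padding gadget, which forces many flips regardless of the formula.
\end{enumerate}

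The main obstacle is soundness against non-canonical centers. Unlike the diameter case, where the universal player controls both endpoints, here the existential player picks an arbitrary matching, and it could be a clever mixed state that is moderately close to everything. I would first attack this by making each $x$-gadget an odd cycle or a long path attached to a hub. The hub would be designed so that any matching not committed to a clean state is at distance greater than $k$ from one of two antipodal canonical matchings. This would force the center to be canonical up to negligible slack, and that slack would be absorbed by a gap between $k$ and the forced excess.
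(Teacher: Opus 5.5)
Your membership argument is fine and agrees with the paper's; it is even more careful about bounding $k$. The hardness part has a genuine gap, and it sits exactly where radius differs from diameter. You propose to reuse the $\forall$-gadget of Section~\ref{sec:diameter} for the $y$-variables and let $M_2$ choose $y$ through it. In the radius game, however, the center is fixed first. That gadget (the $12$-cycle with diagonals $\{2,10\},\{3,11\}$) has exactly three perfect matchings: $A=\{\{1,2\},\{3,4\},\dots,\{11,12\}\}$, $C=\{\{2,3\},\{4,5\},\dots,\{12,1\}\}$ and $B=\{\{2,10\},\{3,11\},\{4,5\},\{6,7\},\{8,9\},\{12,1\}\}$. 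If the center contains $C$, the only alternating $12$-cycle $M_2$ can create is the uncrossed $A\cup C$. If it contains $B$, the only one is the crossed $A\cup B$. The remaining responses (happy edges, or the $4$-cycle on $2,3,11,10$) are cheaper and let the flip sequence reach both literals. This pinning costs the center nothing, so your instance would encode $\exists x\,\exists y\,\exists z\,\varphi$ and give only NP-hardness. This is why the paper uses the one-crossing gadget for the \emph{first} $\exists$-block, where pinning by the center is wanted. For $y$ it designs a new $\forall$-gadget: a $14$-cycle with four diagonals forming two crossings. There the center controls at most one crossing, while $M_{tar}$ always controls the parity of crossings, i.e.\ the value of $y$, at the same cost of eight flips.

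Your proposed remedy also aims at the wrong gadgets. Non-commitment of the center on an $x$-gadget is not the danger: it merely enlarges $M_2$'s options, and the paper just reads off an arbitrary value there. So no hub is needed. The harmful non-canonical behaviour is over-commitment on the $y$-gadgets. No ``mirroring'' padding penalizes it, because pinning does not increase the center's distance to anything; the fix has to be a gadget, not padding. Your sketch also leaves the positions of the isolated vertices unaddressed. The paper settles this with the long path $P$ carrying $v_{M_{tar}}$ and the shorter path $p$ ending in the $4$-cycle $Z$ carrying $v_{M_{in}}$. It then reduces to candidate pairs (Lemma~\ref{lem:lstructural}) with the exact threshold $\frac{L+\ell}{2}+7(m_1+m_3)+8m_2+3K+2$. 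Your ``$k_0$ plus unsatisfied clauses'' accounting and the completeness bound for non-canonical $M_2$ remain placeholders.
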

	
	By showing $\Sigma_3^p$-completeness of the problem, we provide a naturally occuring problem that is complete in this complexity class. As discussed in \cite{grüne2024completenesspolynomialhierarchynatural}, problems in this complexity class are not too well studied and the list of problems that are complete for that class are not too long.
	
	Even though the concepts of diameter and radius seem very similar, we conclude that it is provably harder to compute the radius than to compute the diameter, unless the polynomial hierarchy collapses.
	
	The authors of \cite{aichholzer2025flippingoddmatchingsgeometric} show that it is \NP-hard to compute shortest flip sequences between odd matchings and motivate the question about the existence of approximation algorithms.
	
	As a final result, we prove that computing the flip distance between two odd matchings is $\log$-\APX-hard\footnote{Only after submitting the camera-ready version of this paper we discovered paper \cite{bousquet2019shortest} which deals with an operation called \emph{token jumping}. The authors study a reconfiguration operation that transforms one matching into another. The matchings do not have to be perfect or almost perfect or even inclusion-wise maximal. A reconfiguration step removes an edge and adds another edge such that the resulting set of edges is again a matching. For matchings with a single isolated vertex the setting coincides with the setting in our paper. In~\cite{bousquet2019shortest} the authors show that the length of a shortest reconfiguration sequence cannot be efficiently approximated within a sublogarithmic factor, unless \P=\NP. The proof reduces from \textsc{set cover} and uses similar gadgets to ours. However, the matchings in the proof have one isolated vertex per set and the reduction in~\cite{bousquet2019shortest} is not PTAS-preserving and, thus, does not imply ($\log$-)\APX-hardness.} via a reduction from \textsc{set cover}~\cite{10.1145/2591796.2591884}.
	
	\begin{restatable}{theorem}{APXH}
		\label{thm:APX}
		Given a graph $G$ and two odd matchings $M_{in}$ and $M_{tar}$ of $G$. Computing the flip distance between $M_{in}$ and $M_{tar}$ is $\log$-\APX-hard. In particular, it is NP-hard to approximate the flip distance by a factor better than $\Theta(log(n))$.
	\end{restatable}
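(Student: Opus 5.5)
We give an approximation-preserving (in fact PTAS-/L-reduction) from \textsc{set cover}. Given a universe $U=\{u_1,\dots,u_n\}$ and sets $S_1,\dots,S_m$, we build a graph $G$ and two odd matchings $M_{in},M_{tar}$ with
\[
d(M_{in},M_{tar}) = c\cdot \mathrm{OPT}_{SC} + B,
\]
where $c$ is a fixed constant and $B$ is a known baseline, polynomial in $n+m$. We also show how to turn any flip sequence of length $c\cdot t+B$ into a set cover of size at most $t$ in polynomial time. Since \textsc{set cover} is NP-hard to approximate within $(1-\varepsilon)\ln n$ \cite{10.1145/2591796.2591884}, both claims of Theorem~\ref{thm:APX} follow, provided $B$ is small compared to $\mathrm{OPT}$. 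To ensure this, we either scale the set-gadget cost $c$ to dominate $B$, or design the element gadgets so that the baseline cost is absorbed. The cleanest route is to make $B=0$ up to an additive term linear in $\mathrm{OPT}$.

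\textbf{Construction.} There is a single isolated vertex throughout, which acts as a moving ``token''. A flip moves the isolation along an alternating path: the isolated vertex $v$ is matched to a neighbour $w$, and $w$'s former partner becomes isolated. We use the following gadgets.
\begin{itemize}
\item A central hub vertex $h$, which is isolated both in $M_{in}$ and in $M_{tar}$.
\item For each set $S_j$, a \emph{set gadget}: an alternating path from $h$ of constant length. Entering it costs a fixed number of flips to reach a ``switch'' configuration, and returning costs the same. The return uses alternative edges, so that in the gadget's interior the matching after the round trip may differ from the one before it.
\item For each element $u_i$, an \emph{element gadget}: a small structure (e.g.\ a 4-cycle or a short path segment) whose matching differs between $M_{in}$ and $M_{tar}$. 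It is attached to the set gadgets of all sets containing $u_i$.
\end{itemize}
The gadgets are arranged so that, once the token sits inside the set gadget of $S_j$, each element gadget of $u_i\in S_j$ can be fixed with a constant number of local flips, namely a short detour from the gadget's ``port'' vertex that rotates the element gadget and returns. The key point is that this per-element cost is paid by every solution. The only varying quantity is therefore how many set gadgets are opened, which is exactly what encodes the set cover.

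\textbf{Upper bound.} From a set cover $\mathcal C$ of size $t$, we open the gadgets of the sets in $\mathcal C$ one after another. In each opened gadget we fix its uncovered elements and close it again. The total length is $c\,t+\sum_i a_i$, where $a_i$ is the constant cost of fixing element $u_i$. This gives the baseline $B=\sum_i a_i$.

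\textbf{Lower bound (main obstacle).} We must show that every flip sequence has length at least $c\,|\mathcal C'|+B$ for some set cover $\mathcal C'$, and that $\mathcal C'$ can be extracted efficiently. The plan is as follows.
\begin{itemize}
\item Show that an element gadget's matching can change only while the isolated vertex is inside it.
\item Show that the token can reach the element gadget only through the port of some set gadget containing that element.
\item Let $\mathcal C'$ be the set of gadgets the token enters. Every element is then covered by $\mathcal C'$.
\item Charge the flips: each element gadget needs at least $a_i$ flips while the token is inside it, by a local parity/symmetric-difference argument. Each entered set gadget needs at least $c$ flips to traverse its path from $h$ and back.
\end{itemize}
The delicate part is ruling out shortcuts. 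For example, the token might move between two set gadgets through a shared element gadget without returning to $h$, which would let it visit several sets for less than $c$ each. The first remedy is to make the set-gadget paths long ($c$ much larger than the element gadget sizes). Then such a traversal still passes through most of each path, and we can charge at least $c/2$ per entered gadget, losing only a constant factor. This is still enough for $\log$-APX-hardness.

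Finally, the baseline is handled by padding. Since the reduction yields $\mathrm{OPT}_{flip}=\Theta(c\cdot\mathrm{OPT}_{SC}+n)$, we choose $c=\Theta(n)$. Then $B$ is at most a constant factor of the cover term, and ratios are preserved up to constants. Consequently, an $o(\log n)$-approximation for the flip distance would give an $o(\log n)$-approximation for \textsc{set cover}.
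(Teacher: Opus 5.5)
Your construction outline matches the paper's in essence: a hub isolated in both matchings, one path per set, one alternating $4$-cycle per element attached to the far ends of the paths of the sets containing it, and set paths of length $\Theta(n)$ so the baseline $3n$ is dominated. Your upper bound is fine.

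The gap is the lower bound, which is the entire content of the reduction. You flag shortcuts as the ``delicate part'' but never rule them out. Your remedy is that long paths force every traversal to ``pass through most of each path'', so that $c/2$ can be charged per entered set. This is asserted without argument, and it depends on details you have not fixed: which vertex of an element gadget is attached, and the parities of the path lengths.

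Even if the remedy were true, it would not give what you claim. From $d\ge (c/2)\,|\mathcal C'|+B$, an \emph{optimal} flip sequence only certifies a cover of size about $2\,\mathrm{OPT}_{SC}$. So the back-translation $g$ is not a $(1+\beta(r-1))$-approximation, and the reduction is not an AP-reduction. You would keep the $o(\log n)$ inapproximability but lose log-APX-hardness in the sense the paper defines. The bound also contradicts your own target identity $d(M_{in},M_{tar})=c\cdot\mathrm{OPT}_{SC}+B$.

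Separately, a set gadget whose interior matching ``may differ'' after a round trip is incoherent. $M_{tar}$ is fixed before anyone knows which sets will be used, so the set paths must consist of happy edges. That is precisely what forces a full round trip.

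The missing idea is parity. A flip moves the isolated vertex from $v$ via $w$ to $w'$, i.e.\ along two edges, so in a bipartite graph it never leaves the colour class it starts in. Give all set paths the same length and attach each far end to the same vertex $x_e$ of the $4$-cycle of $e$. Then the graph is bipartite, with $h$, all far ends, and the two cycle-neighbours of each $x_e$ in the token's class.

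A vertex of that class changes its partner only when the token arrives at it or leaves it. So when the token \emph{first} reaches the $2j$-th path vertex $a_{2j}$, that vertex is still matched to $a_{2j-1}$, whose only other neighbour is $a_{2j-2}$ (with $a_0=h$). By induction the token walked down the whole path from $h$, removing all its happy edges, and each must be re-added.

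A flip adds $vw$ and removes $ww'$, which share $w$, so no single flip accounts for two vertex-disjoint relevant edges. Hence:
\begin{itemize}
\item every set whose far end the token visits costs at least $\ell$ flips (its round trip);
\item every element cycle costs at least $3$;
\item the visited sets form a cover, since a cycle can first be entered only from such a far end.
\end{itemize}
This gives $d(M_{in},M_{tar})=c^\ast\ell+3n$ exactly. With $\ell>6n$ and $d\le rd^\ast$, the rounding $c=\lfloor (d-3n)/\ell\rfloor\le rc^\ast+(r-1)/2$ yields an AP-reduction with $f,g$ independent of $\varepsilon$.
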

	
	Full technical details and proofs for statements marked with $(\star)$ will appear in a full version of the paper.
	
	\section{Preliminaries}
	
	\subsection{Union of Odd Matchings}
	
	Let $M_1$ and $M_2$ be two odd matchings on the same graph $G$, then their union $M\cup M'$ admits the following connected components:
	\begin{itemize}
		\item one \emph{alternating path} of even length (possibly zero) that connects the isolated vertices of $M_1$ and $M_2$ and alternates between edges of $M_2$ and $M_1$,
		\item \emph{cycles} of even length alternating between $M_1$ and $M_2$, and
		\item edges that lie in $M_1\cap M_2$, called \emph{happy edges}.
	\end{itemize}
	
	The above partition will be helpful when arguing lower bounds on the length of flip sequences. We will say that we \emph{charge flips towards a component} if a flip sequence needs to perform that number of flips on this connected component. Some easy observations are: (1) If the alternating path contains $k$ edges of $M_1$ and $M_2$ each, then we charge at least $k$ flips towards the path; (2) If an alternating cycle contains $k$ edges of $M_1$ and $M_2$ each, we charge at least $k+1$ flips towards the cycle, $k$ to flip all the edges of $M_2$ in and one additional flip to place the isolated vertex on the cycle, we will call these steps \emph{switching a cycle} and (3) we either charge no flips at all towards a happy edge or at least two since if we remove the happy edge, we need to add it back in.
	
	\begin{figure}[ht]
		\centering
		\includegraphics[scale=0.6]{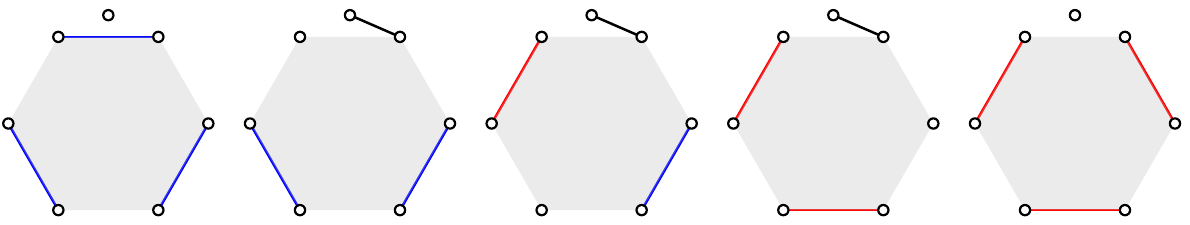}
		\caption{Switching a cycle with three matching edges in four flips.}
		\label{fig:switch}
	\end{figure}
	
	\subsection{The Polynomial Hierarchy}
	
	The polynomial hierarchy was introduced by Stockmeyer \cite{STOCKMEYER19761} and provides a way to compare the complexity of problems beyond \NP-hardness. Complexity classes are defined recursively. The lowest level is $\Sigma_0^p = \Pi_0^p = P$. Then, $\Sigma_k^p$ for $k\geq 1$ is defined as the class of all problems that can be decided in non-deterministic polynomial time with the help of an oracle for the class $\Sigma_{k-1}^p$. Further, $\Pi_k^p = co\text{-}\Sigma_k^p$. In particular $\Sigma_1^p = \NP$ and $\Pi_1^p=co\text{-}\NP$.
	
	We refer to \cite{wrathall1976complete} for a definition of the polynomial hierarchy that is easier to work with in our setting. A \emph{language} is a set $L\subseteq \{0,1\}^\ast$. A language $L$ is contained in $\Pi_2^p$ if there exists some polynomial-time computable function $V$ such that for all $w \in \{0,1\}^\ast$ for suitable $m_1,m_2 \leq poly(\lvert w \rvert)$:
	\begin{equation*}
		w \in L \Leftrightarrow \forall y_1 \in \{0,1\}^{m_1} \exists y_2 \in \{0,1\}^{m_2}: V(w,y_1,y_2) = 1
	\end{equation*}
	
	Similarly, a language $L$ is contained in $\Sigma_3^p$ if there exists some polynomial-time function~$V$ such that for all $w \in \{0,1\}^\ast$ for some suitable $m_1,m_2,m_3 \leq poly(\lvert w \rvert)$:
	
	\begin{equation*}
		w \in L \Leftrightarrow \exists y_3 \in \{0,1\}^\ast \forall y_2 \in \{0,1\}^{m_2} \exists y_3 \in \{0,1\}^{m_3}: V(w,y_1,y_2,y_3) = 1
	\end{equation*}
	
	In \cite{wrathall1976complete} problems are provided that are complete for the respective stages of the polynomial hierarchy. The problem \textsc{$\forall \exists$-SAT} given by all Boolean formulas $\phi$ on variables $x_1,...,x_{m_1}$ and $y_1,...,y_{m_2}$ such that for all assignments of $x_1,...,x_{m_1}$ there exists an assignment of $y_1,...,y_{m_2}$ such that $\phi(x_1,...,x_{m_1},y_1,...,y_{m_2})=1$ is $\Pi_2^p$-complete. Further, the problem \textsc{$\exists\forall\exists$-SAT} given by all Boolean formulas $\psi$ on variables $x_1,...,x_{m_1},y_1,...,y_{m_2},z_1,...,z_{m_3}$ such that there exists an assignment of $x_1,...,x_{m_1}$ such that for all assignments of $y_1,...,y_{m_2}$ there exists an assignment of $z_1,...,z_{m_3}$ such that $\psi(x_1,...,x_{m_1},y_1,...,y_{m_2},z_1,...,z_{m_3})=1$ is $\Sigma_3^p$-complete. We assume without loss of generality that all Boolean formulas are given in conjunctive normal form (CNF).
	
	\subsection{APX-hardness}
	
	For an extensive introduction to the concepts see \cite{crescenzi}. Let \APX~be the set of all problems in \NP~that allow for a constant factor approximation. Further, let $\log$-\APX~describe the class of problems that allow an approximation by a factor that is logarithmic in the input size. A problem is \emph{\APX-hard} if there is a \emph{PTAS-reduction} from every problem in \APX~to said problem. A PTAS-reduction from problem $A$ to $B$ is a set of three functions $f$, $g$, $\alpha$, that are polynomial-time computable for a fixed $\varepsilon$ such that:
	\begin{itemize}
		\item the function $f$ maps an instance of $A$ to an instance of $B$.
		\item the function $g$ takes an instance $x$ of $A$ and an approximate solution of $f(x)$ in $B$ and computes an approximation of $x$
		\item  the function $\alpha$ maps error parameters of problems in $A$ to corresponding parameters of problems in~$B$
		\item if the solution $y$ to $f(x)$ is an $1+\alpha(\varepsilon)$-approximation to the optimal solution, then $g(x,y,\varepsilon)$ is a $1+\varepsilon$ solution to $x$.
	\end{itemize}
	In particular, if there exists no polynomial-time approximation scheme for $A$ and $B$ PTAS-reduces to $A$, then there is also no polynomial-time approxiation scheme for $B$ and if $A$ cannot be approximated within some factor $1+\varepsilon$ then $B$ cannot be approximated within~$1+\alpha(\varepsilon)$.
	
	A PTAS-reduction is called an \emph{AP-reduction} if, additionally, there exists a constant $\beta$ such that whenever $y$ is an $r$ approximation for $f(x)$ then $g(x,y,\varepsilon)$ is a $1+\beta(r-1)$ approximation for~$x$. If additionally $f$ and $g$ do not depend on the choice of $\varepsilon$ then the AP-reduction preserves $\log$-APX membership. A problem is $\log$-APX hard if there exists an AP-reduction for which $f$ and $g$ are independent of $\varepsilon$ from any problem in $\log$-APX to said problem.
	
	\subsection{Set Cover}
	
	Consider the integers from $1$ to $n$ and a collection of sets $S = \{s_1,\ldots,s_t\}$ such that $s_i\subseteq \{1,\ldots n\}$ and $\bigcup_{i=1}^t s_i = \{1,\ldots,n\}$. The \textsc{Set Cover} problem asks for a given integer~$k$ whether there exists a subset $S' \subseteq S$ such that $\lvert S' \rvert = k$ and $\bigcup_{s_i \in S'} s_i = \{1,\ldots,n\}$. \textsc{Set Cover} is known to be $\log$-\APX-hard. Further, the size of a smallest set $S^\ast$ that covers all integers in~$\{1,\ldots,n\}$ can in general not be approximated by a sublogarithmic factor \cite{10.1145/2591796.2591884}, unless~\P=\NP.
	
	\section{Computing the Diameter is $\Pi_2^p$-complete}\label{sec:diameter}
	
	We reduce directly from \textsc{$\forall \exists$-SAT} to computing the diameter of the flip graph. We will introduce gadgets for clauses and for each type of variable.
	
	 As a high level idea, the goal of a flip sequence will be to switch the cycles in all clause gadgets by switching the cycles in all the variable gadgets in a way that places isolated vertices next to clause gadgets. The alternating cycle of an $\exists$-gadget can then be switched in two ways depending on which of the two edges from a central vertex $v$ to the gadget is used to enter the gadget. Each direction for the switch will correspond to an assignment of the variable based on what clauses the isolated vertex is placed next to. In a $\forall$-gadget, there is only one way to enter the gadget from $v$, so the way to switch the cycle is fixed. If the cycle is not crossing in the drawing, the isolated vertex will be placed next to gadgets of clauses that contain the positive literal, and, if the cycle is crossing, next to clauses containing the negative literal. 
	
	We now introduce all gadgets of our reduction in more detail. The reduction will be built around one, aforementioned, central vertex $v$.
	
	\begin{figure}[htb]
		\begin{subfigure}[t]{\dimexpr0.3\columnwidth-10pt\relax}
			\centering
			\includegraphics[scale=0.3]{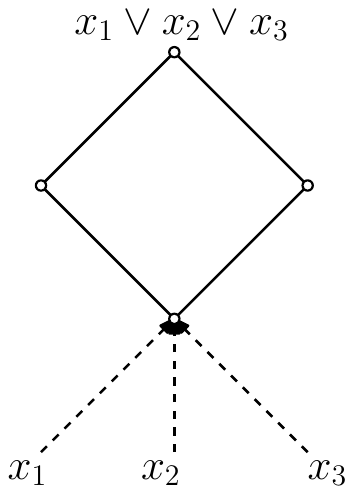}
			\subcaption{The clause Gadget.}%
			\label{fig:clause}%
		\end{subfigure}%
		\hfill%
		\begin{subfigure}[t]{0.3\columnwidth-10pt}
			\centering
			\includegraphics[page=1, scale=0.3]{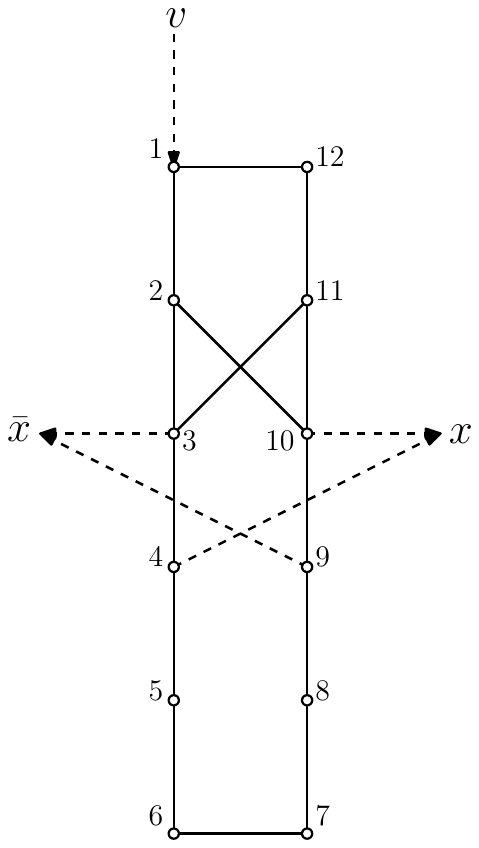}
			\subcaption{The $\forall$-gadget}%
			\label{fig:forall}%
		\end{subfigure}%
		\hfill%
		\begin{subfigure}[t]{0.3\columnwidth-10pt}
			\centering
			\includegraphics[scale=0.3]{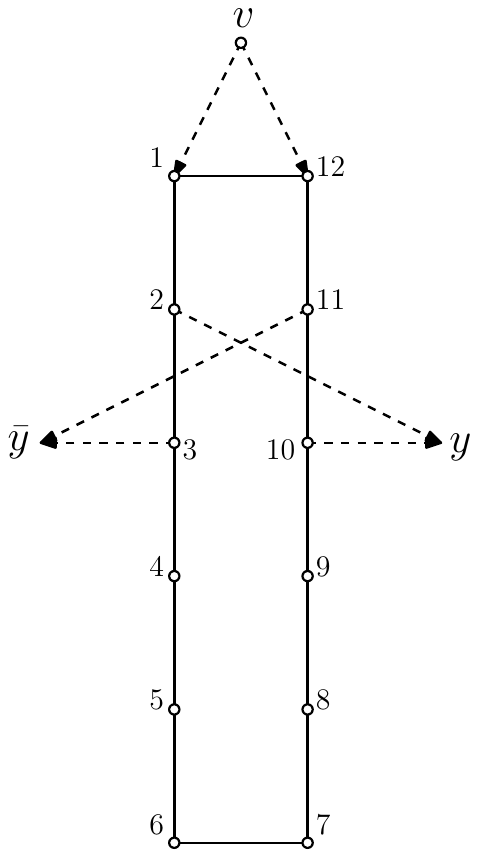}
			\subcaption{The $\exists$-gadget}%
			\label{fig:exists}%
		\end{subfigure}%
		\caption{Gadgets of the reduction}
	\end{figure}
	
	\begin{description}
		\item[Clause gadget (Figure \ref{fig:clause}):] The clause gadget is a 4-cycle that has one vertex which is incident to the later introduced variable gadgets. The idea is to force a perfect matching of the cycle in the initial matching and the other perfect matching in the target matching such that all alternating cycles in variable gadgets need to be switched at some point.
		
		\item[$\forall$-gadget (Figure \ref{fig:forall}):] For a given variable $x$, the variable gadget consists of a cycle of length twelve, with vertices labeled $1$ to $12$ along the cycle, as well as two additional diagonals from vertex $2$ to vertex $10$ and vertex $3$ to vertex $11$ in the cycle that form the crossing as seen in Figure~\ref{fig:forall}. The vertex with label $1$ has an edge that is connected to a central vertex $v$. Two vertices, labeled $3$ and $9$, have edges to all clause gadgets that correspond to clauses that contain $\bar{x}$. Two vertex, labeled $4$ and $10$, are incident to edges to all clause gadgets that correspond to clauses that contain $x$. If $M_{in}\cup M_{tar}$ contains an alternating cycle on a $\forall$-gadget, then this cycle can either contain the introduced crossing or not. To see a crossed and uncrossed cycle, see Figure~\ref{fig:cforall}. The two options will encode the truth value of the corresponding variable.
		
		\begin{figure}[ht]{\columnwidth-10pt}
			\centering
			\includegraphics[page=2, scale=0.3]{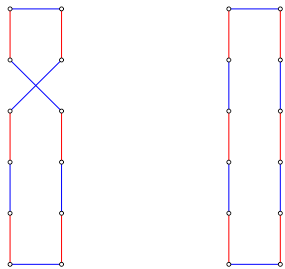}
			\caption{Left: a crossed alternating cycle in a $\forall$-gadget, Right: an uncrossed alternating cycle.}%
			\label{fig:cforall}%
		\end{figure}%
		
		\item[$\exists$-gadget (Figure \ref{fig:exists}):] The $\exists$-gadget corresponding to a variable $y$ consists of a cycle with twelve edges and vertices labeled $1$ to $12$ along the cycle. Two adjacent vertices, labeled $1$ and $12$ are incident to an edge to the vertex $v$. Two vertices, labeled $2$ and $10$ are incident to edges to all gadgets of clauses that contain $y$. Similarly, two vertices, labeled $3$ and $11$ are connected to gadgets of clauses that contain $\bar{y}$. If the union $M_{in}\cup M_{tar}$ contains an alternating cycle on an $\exists$-gadget, the cycle can be switched in two ways depending on which of the two edges incident to $v$ is used to switch the cycle. These two choices will encode the two truth values of the corresponding variable.
		
		\item[Forcing the position of the isolated vertex:] If we take $s$ to be the number of vertices in all clause gadgets, $\forall$-gadgets and $\exists$-gadgets combined, we obtain by \cite[Theorem 10]{aichholzer2025flippingoddmatchingsgeometric} that any connected component of the flip graph of odd matchings of the constructed graph has diameter at most $c\cdot (s+1)$ for a constant $c$. We attach to $v$ a path $P$ of length $\ell=2c\cdot(s+1)$ (See Figure \ref{fig:instance}). If the isolated vertex of at least one of the matchings is placed at the vertex $w$ of $P$ that is  farthest away from $v$, then flipping edges along the path already takes $\frac{\ell}{2}$ flips, which is at least as much as it takes to reconfigure two odd matchings if both have their isolated vertex placed on some clause gadget, $\forall$-gadget, or $\exists$-gadget.
		
		\item[The reduction:] Now, let $\phi$ be a Boolean formula on variables $x_1$,...,$x_{m_1}$ and $y_1,...,y_{m_2}$. We construct a graph $G$ as follows: Start from a single vertex $v$. For every $x_i$ introduce a $\forall$-gadget and connect it to $v$ as described above. For every~$y_i$ introduce an $\exists$-gadget and connect it to $v$ as described. Further, introduce a clause gadget for every clause $C$ in~$\phi$ and connect each clause gadget to all the variable gadgets that correspond to variables in the clause. The connection happens in one of two ways described above depending on whether $x \in C$ or $\bar{x} \in C$ (resp. $y$ or $z$). Then add the path of length $\ell$ at $v$ to obtain a final graph $G'$.
	\end{description}
	
	\begin{figure}[ht]
		\centering
		\includegraphics[scale=0.5]{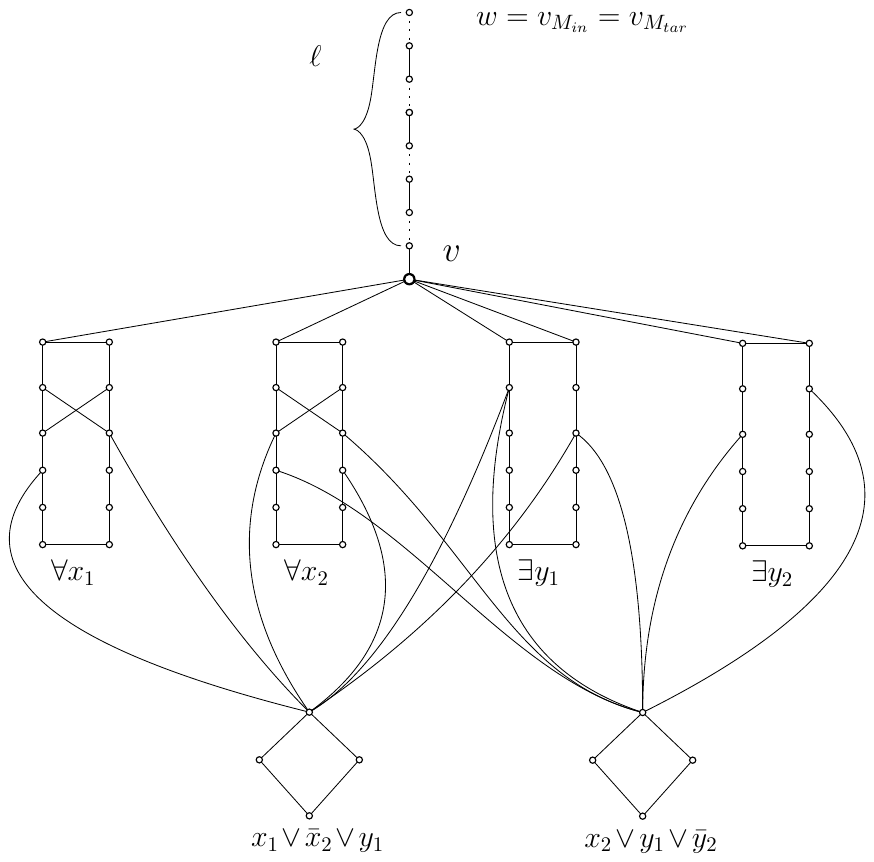}
		\caption{The instance graph for the quantified formula $\forall x_1,x_2~\exists y_1,y_2: (x_1 \lor \bar{x}_2 \lor y_1)\land(x_2\lor y_1\lor\bar{y}_2)$}
		\label{fig:instance}
	\end{figure}
	
	\newpage
	
	First, we need to make sure that the diameter is well defined in this setting.
	
	\begin{restatable}[$\star$]{proposition}{connected}
		\label{prop:connected}
		The flip graph of odd matchings of $G'$ is connected.
	\end{restatable}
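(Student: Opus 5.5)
Rather than invoking the characterization of connected flip graphs from \cite{aichholzer2025flippingoddmatchingsgeometric} as a black box, I would give a direct argument. The plan is to show that every odd matching of $G'$ can be flipped into one fixed canonical odd matching $M^\ast$. I choose $M^\ast$ so that its isolated vertex is $w$, the far endpoint of the path $P$. It matches $v$ along the path, and it contains a fixed perfect matching inside each clause gadget and each variable gadget. These gadget matchings exist because each gadget has an even number of vertices: $4$ or $12$. Since flips are reversible, reaching $M^\ast$ from every odd matching proves connectivity.

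The first step is a structural observation. Flipping the isolated vertex along an alternating path moves it around, and in particular it moves along any matching edge incident to a neighbour. Given an arbitrary odd matching $M$ with isolated vertex $u$, I first move the isolated vertex to $v$. Every gadget hangs off $v$ by edges (for clause gadgets, through variable gadgets), so there is an alternating path from $u$ towards $v$, and flipping along it relocates the isolated vertex. Concretely, I would argue that every vertex of $G'$ is reachable as an isolated vertex by induction on the distance to $v$. The key local fact is that if $a$ is isolated and $b$ is a neighbour of $a$ matched to $c$, one flip makes $c$ isolated. So the set of vertices that can become isolated is closed under this operation. I would check that, starting anywhere, this closure contains $v$, using the cycles inside the gadgets, which are non-bipartite or provide both parities.

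The second step is to fix the gadgets one at a time. With the isolated vertex at $v$, consider a gadget $H$ whose restriction of $M$ differs from $M^\ast$. Because edges only cross between gadgets through the attachment vertices, the symmetric difference with $M^\ast$ consists of alternating cycles and paths. I enter $H$ from $v$ via an edge $v$--$1$, switch each alternating cycle as in Figure~\ref{fig:switch}, and exit back to $v$. Matching edges that leave a gadget and go to a clause gadget must be resolved together with that clause gadget. I would handle these by treating a clause gadget together with an adjacent variable gadget as one unit and rerouting the isolated vertex through them. Finally, I walk the isolated vertex down $P$ to $w$, which leaves the path matched as in $M^\ast$.

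I expect the main obstacle to be the second step for gadgets with cross edges and for path edges. In particular, the path edges incident to $v$ may be arranged so that $v$ is matched into the path, or matched into a gadget. The cleanest fallback is to verify the conditions of the characterization in \cite{aichholzer2025flippingoddmatchingsgeometric}. I recall that these amount to $G'$ being connected and having no bipartite-like obstruction, which can be checked: $G'$ is connected, and it contains odd cycles, for example the $\forall$-gadget with its diagonal $2$--$10$, which creates the cycle $2,3,\ldots,10$ of length $9$. Its blocks should satisfy the stated condition. I would attempt this verification first, and fall back on the explicit argument only if the characterization's hypotheses are awkward to verify.
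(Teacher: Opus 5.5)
There is a genuine gap. The step that carries all the content is only asserted, and the fallback rests on a misremembered criterion.

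In your direct argument, ``enter $H$ via $v$--$1$, switch each alternating cycle, exit back to $v$'' presupposes that every cycle of $M\triangle M^\ast$ can be reached by the isolated vertex and left again without collateral changes. Nothing in your sketch shows this, and it is not automatic. In a $\forall$-gadget the symmetric difference can be the $4$-cycle $2,3,11,10$ formed with both diagonals, which avoids vertex~$1$. A clause $4$-cycle can only be entered from an attachment vertex of a variable gadget. For the attachment vertices $3$ and $9$ of a $\forall$-gadget, isolating them requires routing through a diagonal, because going around the $12$-cycle from vertex~$1$ only ever isolates even-labelled vertices.

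Step~1 has the same defect: a path from $u$ to $v$ in $G'$ is not an alternating path, and odd cycles do not produce one. What works is that $G'-v$ has a perfect matching $N$ (every gadget and $P-v$ has one, as $\ell$ is even). Then the component of $M\triangle N$ containing $u$ is an even alternating path ending at $v$.

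Once $v$ is isolated, your worry about cross edges disappears: a clause vertex matched into a variable gadget would leave the other three clause vertices unmatchable. Moreover, an even alternating path from $v$ to a cycle $C$, cut at its first vertex on $C$, meets $C$ in a single matching edge. So you can flip along it, go around $C$, and flip back. Your route therefore closes exactly when every such cycle contains a vertex that can be isolated, and that is the fact you never verify.

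That fact is what \cite[Theorem~3]{aichholzer2025flippingoddmatchingsgeometric} encodes. Its criterion is not ``connected and non-bipartite'' but edge-wise: every edge $u_1u_2$ lies in all odd matchings, or in none, or $G'-u_1$ or $G'-u_2$ has a perfect matching. Connectivity plus an odd cycle is not enough. Take a vertex with two pendant leaves and one neighbour in a $K_4$. The graph is connected and non-bipartite, yet every odd matching isolates a leaf, the $K_4$ is never touched, and the flip graph is disconnected.

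The paper verifies the actual criterion edge by edge:
\begin{itemize}
\item $G'-v$ is perfectly matchable;
\item vertices on $P$ and on the gadget cycles are isolated along alternating paths from $v$ or while switching a gadget;
\item for each attachment vertex $x$ of a variable gadget, an explicit perfect matching of $G'-x$ is exhibited (Figure~\ref{fig:matchings}), which also lets the clause gadgets be switched.
\end{itemize}
You need these verifications in either framework.
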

	
	We show Proposition~\ref{prop:connected} by showing that $G'$ fulfills the characterization of graphs for which the flip graph is connected from~\cite[Theorem 3]{aichholzer2025flippingoddmatchingsgeometric}.
	
	From now on, we will call a pair of matchings $M_{in}$ and $M_{tar}$ such that $d(M_{in},M_{tar})$ equals the diameter of the flip graph a \emph{maximizing pair}. We will make structural observations on how a maximizing pair looks like. Also, let $v_{M_{in}}$ and~$v_{M_{tar}}$ denote the isolated vertices of $M_{in}$ and $M_{tar}$.
	
	\begin{restatable}[$\star$]{lemma}{structure}
		\label{lem:structure}
		Let $M_{in}$ and $M_{tar}$ be a maximizing pair. Then:
		\begin{enumerate}[(1)]
			\item $v_{M_{in}}=v_{M_{tar}}$ is the vertex $w$ of $P$ that is  farthest away from $v$.
			\item $M_{in}\cup M_{tar}$ contains an alternating cycle on every clause gadget.
			\item $M_{in}\cup M_{tar}$ contains an alternating cycle on every $\exists$-gadget.
			\item $M_{in}\cup M_{tar}$ contains an alternating cycle on every $\forall$-gadget.
		\end{enumerate}
	\end{restatable}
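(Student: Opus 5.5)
The approach is an exchange argument. Given any pair $M_{in},M_{tar}$, I will construct a pair with the four properties whose flip distance is at least as large. Then I show that a pair violating some property has strictly smaller distance than the best pair satisfying it.

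\textbf{Step 1 (isolated vertices at $w$).} Suppose $v_{M_{in}}\neq w$. Then $M_{in}$ and $M_{tar}$ restricted to $P$ are determined by the parity of the position of the isolated vertex. Moving the isolated vertex of $M_{in}$ to $w$ (reflipping the path edges) creates an alternating path of length about $\ell$ along $P$ if $v_{M_{tar}}$ is not on the far end. The number of flips charged to the path is then at least $\ell/2=c(s+1)$ plus the work on the gadget part.

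Conversely, if an isolated vertex lies off $P$ or close to $v$, I bound the distance from above. I walk the isolated vertex along $P$ and use \cite[Theorem 10]{aichholzer2025flippingoddmatchingsgeometric} on the gadget part, which costs at most $c(s+1)$. The aim is that the pair with both isolated vertices at $w$ gains exactly the missing amount: the walk $w\to v$ and back costs $\ell$ flips, which is at least $2c(s+1)>c(s+1)+(\text{path distance})$. I expect this comparison to be the main obstacle. The point to make precise is that any sequence between two matchings with isolated vertex at $w$ whose gadget parts differ must walk the isolated vertex to $v$ and back, which costs $\ell$ flips (charged via the vertex $w$ leaving and returning). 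Any other configuration costs at most $|{\rm dist}|+c(s+1)<\ell+(\text{something nonnegative})$. I will take the trivially achievable lower bound $\ell+1$ (some gadget differs), so maximizers have both isolated vertices at $w$; if $v_{M_{in}}=v_{M_{tar}}=w$ fails, the distance is at most $\ell/2+c(s+1)\le \ell$.

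\textbf{Steps 2--4 (alternating cycles on every gadget).} With both isolated vertices at $w$, $P$ and the edge at $v$ are fixed, and every gadget (clause, $\exists$, $\forall$) is covered by a perfect matching in both matchings. Each gadget cycle is even (length $4$ or $12$), so it has two perfect matchings, plus the crossed variant for $\forall$-gadgets.

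Suppose some gadget carries happy edges only, i.e.\ the same perfect matching in $M_{in}$ and $M_{tar}$. Replace $M_{tar}$ on that gadget by the other perfect matching (for $\forall$, either one). I claim the distance does not decrease. Take an optimal sequence for the new pair and delete all flips that touch edges of that gadget, or better, argue by a projection/monotonicity lemma: any flip sequence for the new pair restricts, after short-cutting the detours through the gadget, to one for the old pair. That yields $d_{\text{old}}\le d_{\text{new}}$.

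For strictness, which is needed since the lemma asserts this for every maximizing pair, I use the charging observations. The new cycle with $k$ target edges is charged at least $k+1$ flips disjoint from the rest, while the sequence for the old pair avoids that gadget entirely. This gives $d_{\text{new}}\ge d_{\text{old}}+1$ provided the short-cut argument removes at least those $k+1$ flips. The subtle case is a gadget whose cycle was already used as a transit route (e.g.\ clause gadgets traversed via variable connections). Here I will argue that the charged switching flips of the new cycle are exactly edges of that gadget, and that the short-cut deletes them without increasing the rest.
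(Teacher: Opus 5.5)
Your overall exchange strategy matches the paper's, but the key step of Steps 2--4 does not hold up.

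\textbf{Steps 2--4.} Your short-cut lemma deletes, from an optimal sequence for the modified pair, all flips touching the modified gadget, and assumes the sequence for the original pair ``avoids that gadget entirely''. This works only for clause gadgets. A clause gadget hangs on a single vertex $c$. An excursion into it starts by matching some variable-gadget vertex $u$ to $c$, and it can only end by re-matching $c$ inside the clause. That re-isolates the same $u$ and changes nothing outside the clause. So deleting all excursions is legal and saves at least three flips, which is exactly the paper's argument for (2).

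For variable gadgets the lemma fails, because clause gadgets are dead ends reachable only through the gadgets of their variables. Suppose a variable gadget carries only happy edges while an attached clause gadget carries an alternating cycle (e.g.\ a clause all of whose literals are on this variable). Then every sequence for the original pair must still enter the variable gadget and break happy edges. Conversely, deleting the flips of a traversal from the modified sequence leaves the clause excursions made from inside that traversal without a valid starting position.

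What is missing is a local simulation and cost comparison. Enumerate the states of $M_{in}\cup M_{tar}$ on the gadget. Then show that whatever clauses a traversal reaches in the big-cycle state can be reached strictly more cheaply in every other state. This uses that the access vertices come in pairs attached to the same clauses ($2,10$ and $3,11$ for $\exists$; $4,10$ and $3,9$ for $\forall$). This is what the paper does with Figures~\ref{fig:estates} and~\ref{fig:foralls}: seven flips for the $12$-cycle of an $\exists$-gadget versus at most four with happy edges, and seven or nine versus four to six for the $\forall$-gadget.

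Your dichotomy ``happy edges only versus alternating cycle'' also misses a state. The $\forall$-gadget has three perfect matchings, and the union of the two containing the edge $1$--$12$ is four happy edges plus the alternating $4$-cycle on $2,3,11,10$. The paper's proof concludes a full crossed or uncrossed $12$-cycle, which is how Claim~\ref{claim:hin} uses it. So this state, which reaches both literal sides in five flips, has to be excluded as well.

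\textbf{Step 1.} Step 1 contains a false bound: ``if $v_{M_{in}}=v_{M_{tar}}=w$ fails, the distance is at most $\ell/2+c(s+1)$''. Take $v_{M_{in}}=w$, let $v_{M_{tar}}$ be the vertex of $P$ at distance $2$ from $w$, and let the gadget parts differ. A flip moves the isolated vertex by at most two, and every change to the gadgets forces it through $v$. So the distance is at least $\ell/2+(\ell-2)/2=\ell-1$ plus the gadget work, well above $\ell$. Your upper bound only covers pairs with an isolated vertex off $P$. For pairs with both isolated vertices on $P$ you need the paper's monotonicity step: with the gadget part fixed, moving an isolated vertex along $P$ further from $v$ strictly increases the distance.
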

	
	\PI*
	
	\begin{proof}
		Containment in $\Pi_2^p$ follows from the definition of $\Pi_2^p$ and the fact that the underlying flip distance problem that has to be solved for every pair of matchings is contained in \NP.
		
		Let $\phi$ on variables $x_1,...,x_{m_1}$ and $y_1,...,y_{m_2}$ and clauses $C_1,...,C_K$ be a $\forall\exists$-SAT instance and $G'$ be the graph as constructed above.
		
		\begin{restatable}{claim}{hin}
			\label{claim:hin}
			If $\phi$ is a \textsc{YES}-instance of \textsc{$\forall\exists$-SAT}, then the diameter of the flip graph of odd matchings of $G'$ is at most $\ell+7(m_1+m_2)+3K$.
		\end{restatable}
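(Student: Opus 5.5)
Since the diameter is attained by a maximizing pair, it suffices to bound $d(M_{in},M_{tar})$ for a maximizing pair. By Lemma~\ref{lem:structure}, both isolated vertices lie at the far end $w$ of $P$, and $M_{in}\cup M_{tar}$ has an alternating cycle on every clause gadget, every $\exists$-gadget and every $\forall$-gadget. On $P$ the two matchings agree: with $w$ isolated, $P$ admits only one such matching. Likewise all edges from gadgets to $v$ or to clause gadgets are absent from both matchings, and $v$ is matched to its neighbour on $P$. So the task is to give an explicit flip sequence of length at most $\ell+7(m_1+m_2)+3K$ that switches all these cycles.

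The sequence runs as follows. First, walk the isolated vertex from $w$ along $P$ to $v$, taking $\ell/2$ flips; walking back at the end costs another $\ell/2$, giving $\ell$ in total. Next, the values of the $\forall$-variables are read off from $M_{in}$ and $M_{tar}$: for each $x_i$, the $\forall$-gadget cycle is crossed or uncrossed, which sets $x_i$ to false or true. Since $\phi$ is a YES-instance, choose an assignment of the $y_j$ that satisfies every clause, and for each clause fix one satisfying literal.

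Each variable gadget is then handled from $v$. From $v$, flip into the gadget through vertex $1$ (for an $\exists$-gadget, through $1$ or $12$ according to the chosen value of $y_j$). Switch the $12$-cycle, which has $6$ matching edges, with the isolated vertex travelling around it. Midway, the isolated vertex reaches the vertex adjacent to the clause gadgets of the satisfied literal (vertex $3/9$ or $4/10$ for $\forall$, $2/10$ or $3/11$ for $\exists$). There we detour to switch each $4$-cycle clause gadget assigned to this literal: the gadget has $2$ matching edges, so entering, switching and leaving costs $3$ flips per clause. Finally return to $v$ via the remaining edge. The target is to show that entering, switching and returning costs exactly $7$ flips per gadget: the $6$ edges plus one extra, because the entry and exit edges to $v$ are consumed and $v$'s path edge is restored when we leave. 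Each clause gadget is switched once, adding $3K$, for a total of $\ell+7(m_1+m_2)+3K$.

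The main obstacle is the local verification on the gadgets. For the $\forall$-gadget, one must check that a crossed cycle, switched from vertex $1$, necessarily parks the isolated vertex at $3$ or $9$ (the $\bar{x}$ side), and an uncrossed one at $4$ or $10$, at the right moment. For the $\exists$-gadget, the two entry directions must reach the $y$ versus $\bar{y}$ attachment vertices. Also, the clause detour must return the isolated vertex to the same attachment vertex, so that the $7$-flip accounting per gadget holds. I would check this first by tracing the alternating cycles in Figures~\ref{fig:forall} and~\ref{fig:exists} flip by flip, treating the clause detour as an inserted switch of a $4$-cycle hanging off an attachment vertex. This mirrors how a cycle is switched in Figure~\ref{fig:switch}.
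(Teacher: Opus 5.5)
Your proposal is correct and follows essentially the same route as the paper: restrict to a maximizing pair via Lemma~\ref{lem:structure}, read the $x_i$ off crossed/uncrossed $\forall$-cycles, fix a satisfying assignment of the $y_j$, walk the isolated vertex from $w$ to $v$, switch every variable gadget in seven flips (choosing the entry edge of each $\exists$-gadget according to $y_j$) with three-flip detours into clause gadgets, and walk back, for a total of $\ell+7(m_1+m_2)+3K$ flips. One small slip in your heuristic explanation of the count: $v$'s path edge is not restored after each gadget. The seven flips are an entry flip adding the edge from $v$ to vertex $1$ (or $12$), five flips around the cycle, and a closing flip that removes that entry edge and leaves $v$ isolated again; the path edge is only restored during the final walk back to $w$.
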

		
		\begin{claimproof}
			Assume $\phi$ is a \textsc{YES}-instance. In order to upper bound the diameter of the flip graph of odd matchings of $G'$ it is sufficient to consider pairs of matchings that fulfill the conditions of Lemma~\ref{lem:structure}. Let $M_{in}$ and~$M_{tar}$ be a pair with these properties. We consider $M_{in}\cup M_{tar}$ on every $\forall$-gadget for some variable~$x_i$. If they form an uncrossed cycle, then we consider the positive assignment of $x_i$, otherwise the negative assignment. Since $\phi$ is a \textsc{YES}-instance of \textsc{$\forall\exists$-SAT}, for the given assignment of the $x_i$ there exists a satisfying assignment of the $y_i$ such that $\phi(x_1,...,x_{m_1},y_1,...,y_{m_2})=1$. We can construct a flip sequence from $M_{in}$ to $M_{tar}$ based on the assignment of $y_1,...,y_{m_2}$.
			
			The flip sequence starts by moving the isolated vertex from $w$ to $v$. Then the flip sequence switches all cycles that belong to $\forall$-gadgets and if the isolated vertex is placed next to an unswitched clause gadget, the flip sequence switches this cycle as well. Afterwards, the flip sequence switches all $\exists$-gadgets. If $y_i$ is assigned a positive (resp. negative) value, then the flip sequence traverses the gadget such that the isolated vertex is placed next to clause gadgets containing $y_i$ (resp. $\bar{y_i}$). Since $\phi(x_1,...,x_{m_1},y_1,...,y_{m_2})=1$, every clause gadget will eventually be switched. The flip sequence ends by moving the isolated vertex back from $v$ to~$w$. There are $\ell$ flips for moving the isolated vertex between $w$ and $v$, seven flips per variable gadget and three flips per clause gadget, which gives us the aspired length of the flip sequence.  
		\end{claimproof}
		
		\begin{restatable}{claim}{ruck}
			\label{claim:ruck}
			$\phi$ is a \textsc{YES}-instance of \textsc{$\forall\exists$-SAT} whenever the diameter of the flip graph of odd matchings of $G'$ is at most $\ell+7(m_1+m_2)+3K$.
		\end{restatable}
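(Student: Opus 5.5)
I would prove the contrapositive: assuming $\phi$ is a \textsc{NO}-instance, I would exhibit two odd matchings whose flip distance exceeds $\ell+7(m_1+m_2)+3K$. Since $\phi$ is a \textsc{NO}-instance, there is an assignment $x^\ast$ of $x_1,\dots,x_{m_1}$ such that no assignment of the $y_i$ satisfies $\phi$. I would build $M_{in}$ and $M_{tar}$ satisfying all four conditions of Lemma~\ref{lem:structure}.
\begin{itemize}
\item Both matchings isolate $w$ and match $P$ identically (the path is then entirely happy edges, together with $v$'s partner on $P$).
\item On every clause gadget they use the two different perfect matchings of the 4-cycle.
\item On every $\exists$-gadget they use the two perfect matchings of the 12-cycle.
\item On the $\forall$-gadget of $x_i$ they use the two perfect matchings of the uncrossed 12-cycle if $x_i^\ast$ is true, and of the crossed cycle (through the diagonals $2$--$10$ and $3$--$11$) if it is false.
\end{itemize}
One has to check that such perfect matchings exist on each gadget with the edges to $v$ and to clause gadgets unused. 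This is a direct inspection of the figures.

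For the lower bound I would charge flips to components of $M_{in}\cup M_{tar}$, using observations (1)--(3) of the preliminaries. Each 4-cycle of a clause gadget needs at least $3$ flips. Each 12-cycle of a variable gadget needs at least $7$ flips. The isolated vertex must travel from $w$ to $v$ and back, since every gadget cycle can only be entered through $v$ or through the clause edges, which in turn are only reachable via $v$. This travel costs at least $\ell$ flips along $P$, which consist of $\ell/2$ flips out and $\ell/2$ back. These charges are disjoint, so any flip sequence has length at least $\ell+7(m_1+m_2)+3K$. Equality forces the following.
\begin{itemize}
\item No happy edge outside $P$ is ever touched, since touching one would cost at least two extra flips.
\item Every gadget cycle is switched in exactly the minimal number of flips, i.e.\ entered once and traversed once.
\item The isolated vertex never makes a detour: every flip either belongs to the walk on $P$ or is part of the switching of a cycle.
\end{itemize}

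The key step is then to read off an assignment from an optimal sequence and derive a contradiction. A clause gadget can only be entered via an edge from a variable gadget vertex that is currently isolated. In a minimal switch of a 12-cycle, the isolated vertex visits only specific positions, so it sits adjacent to the clause gadgets of one literal only. For a $\forall$-gadget the entry is forced through vertex $1$, and the crossed/uncrossed shape then determines whether the visited vertices ($3,9$ versus $4,10$) connect to clauses of $\bar{x}_i$ or of $x_i$, matching $x^\ast$. For an $\exists$-gadget, entering via $1$ or via $12$ makes the isolated vertex pass through $\{2,10\}$ or $\{3,11\}$ respectively, which defines a value for $y_i$. Since every clause gadget must be switched, and can only be entered from an isolated vertex at a literal-vertex adjacent to it, each clause contains a literal made true by this combined assignment. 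That contradicts the choice of $x^\ast$.

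The main obstacle is rigorously excluding "cheating" sequences of the minimal length. Two kinds must be ruled out: a sequence that partially switches a variable cycle, visits clauses from both literal sides, and still uses only $7$ flips; and a sequence that interleaves switches, for example leaving a cycle midway through a clause gadget and returning to it. I would handle this by a careful case analysis of 12-cycle switching with exactly $7$ flips. Such a switch must start at an entry vertex adjacent to the cycle, and then each flip advances along the cycle in one direction. I would verify for each gadget, by tracking parity along the alternating cycle, exactly which vertices become isolated. For interleaving, I would argue that any departure from a cycle midway via a clause edge and any later return costs extra flips on a clause gadget beyond $3$. Alternatively, it breaks a happy edge, which contradicts tightness.
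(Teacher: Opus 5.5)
Your proposal is correct and follows the paper's proof. You use the same pair of matchings: crossed or uncrossed $\forall$-cycles according to the $x$-assignment, alternating cycles on all other gadgets, and both isolated vertices at $w$. You use the same tightness argument forcing exactly seven flips per variable gadget and three per clause gadget, and the same read-off of the $y$-assignment from where the isolated vertex sits. Arguing by contrapositive with a single bad $x^\ast$ is equivalent to the paper's ``repeat for all assignments of the $x_i$'' phrasing.

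Your extra step, showing that a tight sequence is forced to traverse each gadget in one fixed way, is only asserted in the paper, and it does go through. The key point is that in a tight sequence the only edges outside $M_{tar}$ that may be added are the first entry edge of each cycle and the outbound edges along $P$. Diagonals and previously removed $M_{in}$-edges can therefore never be added. A clause detour always returns to the same variable vertex, since the clause vertex incident to the variable gadgets is never isolated.

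One wording fix: a detour from a variable cycle into an unswitched clause gadget and back to the same vertex costs exactly three flips and is the intended behaviour. What your extra-cost argument actually needs to exclude is passing through a clause into a different variable gadget.
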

		
		\begin{claimproof}
			Assume the diameter of the flip graph of odd matchings is at most $\ell+7(m_1+m_2)+3K$. For a given assignment of $x_1,...,x_{m_1}$, we construct a pair of matchings $M_{in}$ and $M_{tar}$. If~$x_i$ appears in its positive form, $M_{in}\cup M_{tar}$ has an uncrossed cycle in the $\forall$-gadget that corresponds to $x_i$, otherwise $M_{in}\cup M_{tar}$ has a crossed cycle. Additionally, $M_{in}\cup M_{tar}$ has an alternating cycle on all $\exists$-gadgets and clause gadgets, and $v_{M_{in}}=v_{M_{tar}}=w$.
			
			Since there exists a flip sequence from $M_{in}$ to $M_{tar}$ of length at most $\ell+7(m_1+m_2)+3K$, exactly seven flips are performed in every variable gagdet and exactly three flips are performed in every clause gadget. Less flips would not suffice to switch the cycle in the respective gadget. With more flips we would exceed the length of the flip sequence. This means there is an isolated vertex placed next to every clause gadget at some point in a flip sequence that switches every variable gadget once. We construct an assignment of $y_1,...,y_{m_2}$ from where the isolated vertex was placed during the traversal of the corresponding variable gadgets. This assignment satisfies $\phi(x_1,...,x_{m_1},y_1,...,y_{m_2})=1$. Repeating this procedure for all assignments of $x_1,...,x_{m_1}$ shows that $\phi$ is a \textsc{YES}-instance of \textsc{$\forall\exists$-SAT}.  
		\end{claimproof}
		
		The theorem follows from a combination of the two claims.
	\end{proof}
	
	
	\section{Computing the Radius is $\Sigma_3^p$-complete}\label{sec:radius}
	
	We reduce from \textsc{$\exists\forall\exists$-SAT} to computing the radius of the flip graph. We now introduce all gadgets for our reduction that will again be built around one central vertex $v$.
	
	\begin{figure}[ht]
		\centering
		\includegraphics[scale=0.35]{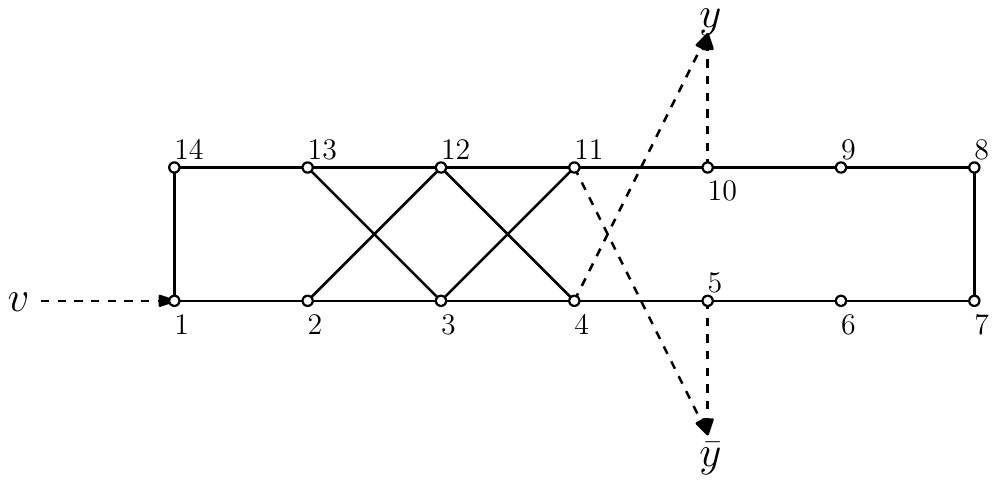}
		\caption{$\forall$-gadget for the reduction from $\exists\forall\exists$-\textsc{SAT} to computing the radius}
		\label{fig:2forall}
	\end{figure}
	
	\begin{description}
		\item[Clause gadget (Figure \ref{fig:clause}):] The clause gadget is the same as in Section~\ref{sec:diameter}.
		
		\item[First $\exists$-gadget (Figure \ref{fig:forall}):] The first $\exists$-gadget is the same as the $\forall$-gadget of Section~\ref{sec:diameter}.
		
		\item[$\forall$-gadget (Figure \ref{fig:2forall}):] The $\forall$-gadget corresponding to a variable $y$ consists of a cycle with $14$ edges and vertices labeled $1$ to $14$ along the cycle, as well as four diagonals, from $2$ to $12$, $3$ to $11$, $4$ to $12$, and $3$ to $13$, of the cycle that form the two crossings in Figure~\ref{fig:2forall}. Two vertices, $4$ and $10$, have edges to clauses that contain the positive literal $y$ and two vertices, $5$ and $11$, with edges to clauses containing the negative literal $\bar{y}$ and vertex $1$ has an edge from the vertex $v$.
		
		\item[Second $\exists$-gadget (Figure \ref{fig:exists}):] The second exists gadget coincides with the $\exists$-gadget of Section~\ref{sec:diameter}. Notation wise, the variables corresponding to those gadgets will now be called~$z$ instead of $y$.

		\begin{figure}[ht]
			\centering
			\includegraphics[scale=0.5]{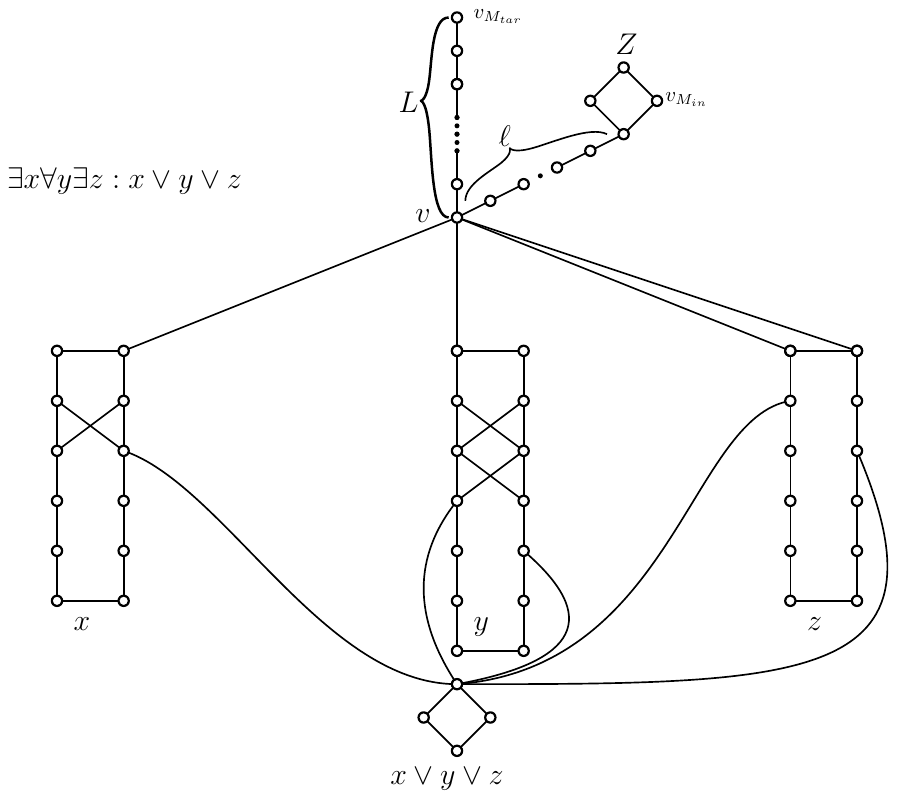}
			\caption{Instance graph for the quantified formula $\exists x~\forall y~\exists z: x \lor y \lor z$.}
			\label{fig:radius}
		\end{figure}
		
		\item[Forcing position of the isolated vertices:] We conclude the construction by attaching two paths to $v$. Let $s$ be the number of vertices in clause and variable gadgets. We set $\ell = 2c\cdot(s+1)$ where $c$ is chosen according to \cite[Theorem~10]{aichholzer2025flippingoddmatchingsgeometric} such that $\ell$ is larger than twice the diameter of the flip graph of odd matchings on $v$ and the clause and variable gadgets. We add a path $p$ of length $\ell$ and attach one end to $v$ and to the other end we attach a four-cycle $Z$. Further, let $L = 2c\cdot(s+\ell+1)$ and add a path $P$ of length $L$ and attach one end of $P$ to $v$. No matter what a matching $M_{in}$ in the center of the flip graph looks like, a matching $M_{tar}$ can maximize the distance to $M_{in}$ by having an isolated vertex at $v_{M_{tar}}$, the end vertex of $P$ that is not attached to $v$ and by differing from $M_{in}$ on $Z$. By doing so, any flip sequence is forced to traverse both $p$ and $P$. Vice versa, by having the isolated vertex of $M_{in}$ placed on $Z$ it can be guaranteed that a flip sequence traverses $p$ only once instead of twice.
		
		\item[The reduction:] Now, let $\psi$ be a Boolean formula on variables $x_1$,...,$x_{m_1}$, $y_1,...,y_{m_2}$ and $z_1,...,z_{m_3}$. We construct a graph $G$ as follows: Start from a single vertex $v$. For every $x_i$ introduce a first $\exists$-gadget and connect it to $v$ as described above. Repeat for all $y_i$ and $\forall$-gadgets and all $z_i$ and the second $\exists$-gadget. Further, for every clause $C$ in $\psi$ introduce a clause gadget and connect the gadget to all the variable gadgets that correspond to variables in the clause. The connection happens in one of two described ways depending whether $x \in C$ or $\bar{x} \in C$ (resp.~$y$ or $z$). Then add the paths $p$ and $P$ as described to obtain a final graph $G''$. Again, we need to make sure that the radius is well defined in this setting.
	\end{description}
	
	\begin{restatable}[$\star$]{proposition}{connecteds}
		\label{prop:sconnected}
		The flip graph of odd matchings on $G''$ is connected.
	\end{restatable}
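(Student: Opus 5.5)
We plan to prove Proposition~\ref{prop:sconnected} in the same way as Proposition~\ref{prop:connected}: we verify that $G''$ satisfies the characterization of graphs with a connected flip graph of odd matchings from \cite[Theorem 3]{aichholzer2025flippingoddmatchingsgeometric}. First we check the basic prerequisites. $G''$ is connected, since every gadget is attached to $v$ and $p$, $Z$ and $P$ hang off $v$. Moreover $G''$ has an odd number of vertices, which we can arrange if necessary by adjusting the length of $P$ by one; its length only matters up to an additive constant.

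The characterization is phrased in terms of local structure: roughly, odd matchings exist, and every vertex can be made the isolated vertex, with the obstructions coming from particular configurations such as bridges and blocks of certain parity or matching-covered structure. Our approach is to decompose $G''$ into its blocks and check the condition block by block. The natural decomposition is:
\begin{itemize}
\item the pendant path $P$, whose edges are bridges;
\item the path $p$ ending in the four-cycle $Z$;
\item one $2$-connected core, consisting of $v$ together with all variable and clause gadgets. This core is $2$-connected because every clause gadget is joined to at least two vertices of each variable gadget it uses, and every $\exists$-gadget of the second kind has two edges to $v$. The first $\exists$-gadgets and the $\forall$-gadgets each have only one edge to $v$, but they are also tied into the core through their clause edges. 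If some variable appears in no clause, its gadget hangs off $v$ by a bridge; we treat it as an additional block attached to $v$.
\end{itemize}

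For each block we exhibit the property the theorem requires.
\begin{itemize}
\item \textbf{Paths and $Z$.} For the bridge paths we use parity: a path attached at $v$ leaves an even or odd remainder, and we check that odd matchings exist with the isolated vertex on either side of each bridge.
\item \textbf{Gadgets.} For the cycle gadgets of length $12$ and $14$ and the four-cycle clause gadgets, even cycles have perfect matchings, and an odd matching with the isolated vertex on a gadget vertex can be built by matching along the cycle and absorbing the leftover vertex through $v$ or through a path.
\item \textbf{The core.} Here we would show that the core contains an odd cycle, or is non-bipartite in the sense the characterization needs, since an even-cycle-only bipartite core can be a special case. The crossing diagonals of the $\forall$-gadgets, for example $2$--$12$ in a $14$-cycle, together with the clause attachments, should give odd cycles. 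Otherwise we argue directly that every vertex of the core can host the isolated vertex.
\end{itemize}

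The main obstacle will be matching the exact hypotheses of \cite[Theorem 3]{aichholzer2025flippingoddmatchingsgeometric} in this bipartiteness and parity case analysis. In particular, we must make sure that the new $\forall$-gadget with two crossings, and the four-cycle $Z$ at the end of $p$, do not create a forbidden configuration, such as a bipartite block where the isolated vertex is confined to one colour class. We would first compute the bipartition of each gadget with its diagonals, and then choose or verify the parities of $\ell$ and $L$ so that every block satisfies the condition.
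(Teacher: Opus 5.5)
Your proposal has a genuine gap. It never pins down what \cite[Theorem~3]{aichholzer2025flippingoddmatchingsgeometric} actually requires, and the conditions you set out to verify are the wrong ones; some of them are even false for $G''$.

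As used in the proof of Proposition~\ref{prop:connected}, the characterization is edge-wise. The flip graph is connected if and only if every edge $e=u_1u_2$ satisfies one of three conditions: it lies in all odd matchings, or in none, or at least one of $G''-u_1$, $G''-u_2$ has a perfect matching. Blocks, $2$-connectivity and odd cycles play no role.

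\begin{itemize}
\item Your worry about bipartite blocks with the isolated vertex confined to one colour class is unfounded. A path on $2k+1$ vertices is bipartite and its isolated vertex always lies in the larger class, yet its flip graph is a path, hence connected.
\item The core is not $2$-connected. Each clause $4$-cycle, like $Z$, meets the rest of $G''$ only in its attachment vertex, which is therefore a cut vertex.
\item Your fallback, that every vertex can host the isolated vertex, is false. In a clause cycle $c,a,b,d$ attached at $c$, deleting $b$ forces both $a$ and $d$ to be matched to $c$, so $b$ is never isolated. Likewise the vertices of $P$ at odd distance from $v$ are never isolated.
\end{itemize}
Under your reading of Theorem~3 the proposition would therefore be false.

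What is needed is the edge-by-edge case analysis of Proposition~\ref{prop:connected}, extended to $p$, $Z$ and the new gadget.
\begin{itemize}
\item Edges at $v$: $G''-v$ has a perfect matching, because every gadget, $P-v$, and $p-v$ together with $Z$ each have one.
\item Edges on $P$ and $p$: vertices at even distance from $v$ can be isolated, and every path edge has such an endpoint.
\item Edges inside a variable-gadget cycle: switching the cycle from $v$ isolates every second cycle vertex.
\item Edges inside a clause $4$-cycle or $Z$: switching it from an isolated neighbouring vertex isolates both neighbours of the attachment vertex.
\end{itemize}

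The genuinely new work is the $14$-vertex $\forall$-gadget. For each vertex $u\in\{4,5,10,11\}$ adjacent to clauses, you must exhibit a perfect matching of the gadget plus $v$ minus $u$ that uses the edge $v$--$1$. For example, for $u=5$ take $v$--$1$, $2$--$3$, $4$--$12$, $6$--$7$, $8$--$9$, $10$--$11$, $13$--$14$. This is what Figure~\ref{fig:mat2} supplies.

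Be careful with the diagonals $3$--$11$ and $3$--$13$: they join two vertices that the cycle-switching argument never isolates. They are covered because $G''-3$ also has such a perfect matching: $v$--$1$, $2$--$12$, $4$--$5$, $6$--$7$, $8$--$9$, $10$--$11$, $13$--$14$.
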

	
	Again, we verify this by checking the characterization from~\cite[Theorem 3]{aichholzer2025flippingoddmatchingsgeometric}.
	
	We will again give a characterization of pairs of matchings, which can determine the radius. We will argue that, without loss of generality, we do not have to consider all other pairs of matchings but only a set of candidates. As a high level argument, we make sure that in a candidate pair we cannot make a change to $M_{tar}$ locally in a single gadget such that the flip distance to $M_{in}$ increases and we cannot immediately make a change to $M_{in}$ in a single gadget such that the flip distance decreases even if $M_{tar}$ is changed locally to respond to the changes to $M_{in}$.
	
	\begin{restatable}[$\star$]{lemma}{structural}
		\label{lem:lstructural}
		Without loss of generality, to determine the radius of the flip graph, it is sufficient to consider pairs of matchings $M_{in}$ and $M_{tar}$ that have the following properties:
		\begin{enumerate}[(1)]
			\item $v_{M_{tar}}$ is the vertex on $P$ that is farthest away from $v$.
			\item $v_{M_{in}}$ is placed on $Z$.
			\item $M_{in}\cup M_{tar}$ contains an alternating cycle on every clause gadget.
			\item $M_{in}\cup M_{tar}$ contains an alternating cycle on every variable gadget.
		\end{enumerate}
	\end{restatable}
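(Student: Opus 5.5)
The goal is a \emph{dominance} statement. For every center candidate $M_{in}$ there is a worst-case partner $M_{tar}$ satisfying (1), (3), (4). For the purpose of minimizing $\max_{M_{tar}} d(M_{in},M_{tar})$, we may also assume $M_{in}$ satisfies (2) and agrees with a ``canonical'' structure on the gadgets. I would argue in two rounds: first fix an arbitrary $M_{in}$ and improve $M_{tar}$ (which can only increase the inner maximum), then improve $M_{in}$ (which must not increase the eccentricity). Throughout I use the charging observations from the preliminaries: an alternating path with $k$ edges per side costs at least $k$, a cycle with $k$ edges per side costs at least $k+1$, and a disturbed happy edge costs at least $2$. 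I also use \cite[Theorem~10]{aichholzer2025flippingoddmatchingsgeometric}, which bounds distances inside the gadget part by $\ell/2$, and $\ell\ll L$.

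\textbf{Step 1 (properties (1), (3), (4) for $M_{tar}$).} Fix $M_{in}$. I take $M_{tar}$ with $v_{M_{tar}}$ at the far end of $P$. The path $P$ then carries an alternating path of length about $L$, so any flip sequence must bring the isolated vertex from $P$'s end to $v$. Since $L=2c(s+\ell+1)$ exceeds twice any distance achievable elsewhere, any $M_{tar}$ with isolated vertex off $P$'s end has distance less than this, so (1) holds for a maximizer. Next, I modify $M_{tar}$ locally on each clause and variable gadget so that $M_{in}\cup M_{tar}$ is an alternating cycle there. Every gadget is an even cycle (4, 12 or 14 vertices), so it has a perfect matching complementary to the one $M_{in}$ uses, provided $M_{in}$ restricted to the gadget is perfect. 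If $M_{in}$ is not perfect on a gadget, it has edges leaving the gadget; I would show that such an $M_{in}$ is dominated in the second round, so I treat it there. To see that switching to a full alternating cycle does not decrease the distance, I take any sequence to the modified $M_{tar}$ and convert it into one of no greater length to the unmodified target. This is a local exchange argument: the gadget's cycle cost is charged independently, since the gadgets interact only through $v$ and the clause edges. Symmetrically, for $Z$ I let $M_{tar}$ differ from $M_{in}$ on $Z$, forcing a traversal of $p$.

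\textbf{Step 2 (property (2) and canonical $M_{in}$).} Given $M_{in}$, I compare its eccentricity with that of the modified $M'_{in}$, which has its isolated vertex on $Z$ and is perfect on every gadget. For $M'_{in}$ the worst partner forces one traversal of $p$ plus one of $P$ plus gadget switching. If $v_{M_{in}}$ is not on $Z$ (e.g.\ on $P$, on $p$, or in a gadget), then the partner from Step 1 that differs on $Z$ forces the isolated vertex to go to $Z$ and back (about $2\ell$), or to travel additionally along $P$. Since $\ell$ exceeds twice the gadget diameter, this costs more than any saving in the gadget part. Similarly, if $M_{in}$ uses edges between gadgets and $v$ or clause edges, I replace them with internal perfect matchings. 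The lost ``shortcuts'' are only worth $O(s)\le \ell/2$, and are compensated because the worst-case partner can exploit such edges to create longer alternating structures.

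\textbf{Main obstacle.} The delicate part is Step 2's claim that a non-canonical $M_{in}$ never has strictly smaller eccentricity. I expect it to be the hardest because choices inside the gadgets (which perfect matching of a $\forall$-gadget $M_{in}$ uses, i.e.\ crossed versus uncrossed) genuinely matter and must be preserved, not normalized. The plan is to normalize only the global features (the isolated vertex position and the non-gadget edges) via the $\ell$ and $L$ separation, while leaving the within-gadget perfect matchings as the free ``$\exists x$'' choice. I would first try a potential argument: any deviation costs at least $2$ flips per disturbed happy edge in the worst partner, which dominates the at most one saved flip per gadget.
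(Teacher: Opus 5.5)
Your treatment of (1) and (2) follows the paper: $L$-separation fixes $v_{M_{tar}}$, and $\ell$-separation fixes $v_{M_{in}}$. The argument you give for (4), however, does not establish it. You assert that replacing $M_{tar}$ on a variable gadget by a complementary matching that closes an alternating cycle cannot decrease the distance, because the gadget's cycle cost ``is charged independently''.

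For clause gadgets this local exchange does work. An excursion into a $4$-cycle can only exit by flipping onto the attachment vertex, which returns the isolated vertex to the variable vertex it came from, so the excursion can be deleted. For variable gadgets, independence is exactly what needs proof. The clause edges are the channel through which the state of $M_{in}\cup M_{tar}$ on the gadget decides which clause-adjacent vertices the isolated vertex can visit, and at what price.

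To turn a sequence for the cycle state into one for a happy-edge or partial state, you must show that every visiting pattern the cycle offers for its $7$ (resp.\ $8$) flips is available in the other state for at most as many flips. In a happy state, visiting an attachment vertex costs two flips per happy edge broken on the way from $v$, so the outcome depends on where the attachment vertices sit. If one could only be reached by breaking four happy edges, the happy state would cost $8>7$ and your exchange would run the wrong way.

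The paper supplies this by enumerating all possible unions on each gadget type (Figures~\ref{fig:estates}, \ref{fig:foralls}, \ref{fig:sforall2}):
\begin{itemize}
\item on the $\exists$-gadget, $7$ flips for the cycle versus $4$ with happy edges;
\item on the first $\exists$-gadget, $7$ resp.\ $9$ versus $4$ resp.\ $6$, or $5$ for the partial states;
\item on the $\forall$-gadget, $8$ (plus $2$ for the other polarity) versus $4$--$6$ (plus $2$) for happy states and $5$--$7$ (plus at most $2$) for partial states.
\end{itemize}
It adds that in non-cycle states the choice of $M_{tar}$ no longer restricts the reachable polarity. You would also need to check, rather than infer from ``even cycle'', that every perfect matching of a gadget with diagonals has a complementary matching forming a single alternating Hamiltonian cycle. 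This gadget-specific dominance, not the normalisation of $M_{in}$, is the real content of the lemma.

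The difficulty you flag in Step~2 largely disappears, and the fix you propose there would not work as stated. The radius threshold is an exact count, and rematching edges inside the gadget part incurs no $\Theta(\ell)$ penalty, so an ``$O(s)\le \ell/2$'' bound compensates for nothing. Use parity instead:
\begin{itemize}
\item If $v_{M_{in}}$ lies on $Z$, a count along $p\cup Z$ forces $v$ to be matched to its neighbour on $p$. A clause gadget whose attachment vertex is matched outside would leave a three-vertex path with no other exits. Hence every clause and variable gadget is perfectly matched internally, and the same holds for $M_{tar}$ under (1).
\item Conversely, a $v$--gadget or clause--variable edge in $M_{in}$ traps the isolated vertex in the gadget part, which your $\ell$-argument already excludes.
\end{itemize}
Note that the detour to $Z$ and back costs about $\ell$ (two flips for each of the $\ell/2$ happy edges of $p$), not $2\ell$. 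Finally, for (1) you only rule out $v_{M_{tar}}$ lying off $P$. As the paper notes, you still need that moving it farther out along $P$ increases the distance.
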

	
	\vspace{11pt}
	
	With Lemma \ref{lem:lstructural} in mind, we describe the high level strategies for $M_{in}$, $M_{tar}$ and a shortest flip sequence:
	
	Since all clause gadgets contain alternating cycles, a shortest flip sequence will have to switch them all by entering them through a variable gadget. Adjacency relations of gadgets correspond to containment relations of variables in clauses.
	
	For the first $\exists$-gadget, while switching an uncrossed cycle, the isolated vertex will be next to clause gadgets containing the positive literal, otherwise, if the cycle is crossed, the clauses with negative literals can be switched. Also note, that $M_{in}$ can determine, whether the cycle is crossed or uncrossed. $M_{in}$ may choose not to control the existence of the crossing if the value of a particular variable does not matter.
	
	For the $\forall$-gadget, while switching the cycle, the isolated vertex is next to clauses with positive literals if the cycle is crossing twice or not at all, however if it crosses once, the isolated vertex is placed next to clauses with negative literals. While $M_{in}$ can control to add one crossing to the cycle or not, the existence of a second crossing will then always be controlled by $M_{tar}$.
	
	For the second $\exists$-gadget, there is only one option for an alternating cycle up to swapping $M_{in}$ and $M_{tar}$. The direction of traversal is then chosen by the flip sequence depending on which of the two edges incident to $v$ is chosen to enter the gadget. The direction of traversal then determines whether the isolated vertex will be placed next to clauses with positive or negative literals.
	
	\SIG*
	
	\begin{proof}
		Containment in $\Sigma_3^p$ follows from the definition of $\Sigma_3^p$ and the fact that the underlying flip distance problem that has to be solved for every matching~$M_{tar}$ is contained in \NP.
		
		Let $\psi$ on variables $x_1,...,x_{m_1}$, $y_1,...,y_{m_2}$ and $z_1,...,z_{m_3}$ and clauses $C_1,...,C_K$ be a \textsc{$\exists\forall\exists$-SAT} instance and $G''$ be the graph as constructed above.
		\begin{restatable}{claim}{hiin}
			\label{claim:hiin}
			If $\psi$ is a \textsc{YES}-instance of \textsc{$\exists\forall\exists$-SAT}, then the radius of the flip graph of odd matchings of $G''$ is at most $\frac{L+\ell}{2}+7(m_1+m_3)+8m_2+3K+2$.
		\end{restatable}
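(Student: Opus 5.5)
To prove Claim~\ref{claim:hiin}, I would exhibit a concrete center $M_{in}$ from a satisfying assignment of the $x_i$, and then show that every $M_{tar}$ is within the stated distance. By Lemma~\ref{lem:lstructural} it suffices to check pairs satisfying properties (1)--(4). However, the center itself must be fixed first, so I would proceed as follows.

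Fix an assignment $x_1,\dots,x_{m_1}$ such that for all $y$ there is a $z$ with $\psi=1$. Define $M_{in}$ as follows:
\begin{itemize}
\item its isolated vertex lies on $Z$;
\item on each first $\exists$-gadget, choose the edges so that any alternating cycle with a target is uncrossed if $x_i$ is true and crossed otherwise;
\item on each $\forall$-gadget, fix the one crossing that $M_{in}$ controls (say, none), so that the remaining crossing parity is left to $M_{tar}$;
\item on the paths $p$ and $P$, use the matchings forced by the isolated vertex sitting on $Z$.
\end{itemize}
The distance budget is then accounted for as follows. From $Z$, traversing $p$ costs about $\ell/2$ flips, while reaching the far end of $P$ and handling it costs about $L/2$. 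The $+2$ accounts for fixing the four-cycle $Z$. Each first and second $\exists$-gadget costs $7$ flips, each $\forall$-gadget costs $8$ flips (it has $7$ matching edges plus one entering flip), and each clause gadget costs $3$ flips.

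For an arbitrary target $M_{tar}$, I would argue, mirroring Lemma~\ref{lem:lstructural}, that the worst case has $v_{M_{tar}}$ at the end of $P$ and alternating cycles on all gadgets. A target with fewer differences only decreases the required work: gadgets without cycles need $0$ flips, and paths traversed only partway are cheaper. I would make this monotonicity explicit by charging flips per component, as in the preliminaries.

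For this worst-case target, I read off the $y$-assignment from the crossing parity of each $\forall$-gadget cycle. Then I pick a satisfying $z$. The flip sequence is:
\begin{enumerate}
\item Resolve $Z$ using at most $2$ flips.
\item Move the isolated vertex along $p$ to $v$, costing $\ell/2$.
\item Switch every first $\exists$-gadget and every $\forall$-gadget. Whenever the isolated vertex sits next to an unswitched clause gadget, switch that clause in $3$ flips and return.
\item Switch each second $\exists$-gadget in the direction matching $z_i$.
\item Finally, walk down $P$ at a cost of $L/2$.
\end{enumerate}
Because $\psi(x,y,z)=1$, every clause gadget is adjacent to the isolated vertex at some point, so all clause cycles get switched.

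The main obstacle is the center argument for arbitrary, non-canonical $M_{tar}$. I would need to show that a target differing from $M_{in}$ elsewhere, for example on $p$ or in partial gadgets, cannot exceed the bound. In particular, I must check that the cost of $p$ and $Z$ together is really $\ell/2+2$ rather than $\ell$, which holds because $M_{in}$ starts on $Z$. I would handle this by a case analysis on the location of $v_{M_{tar}}$, using the component charging bounds and the choice $L\gg\ell\gg$ the gadget diameter from \cite[Theorem~10]{aichholzer2025flippingoddmatchingsgeometric}, which makes any target not ending on $P$ far cheaper.
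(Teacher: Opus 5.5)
Your proposal is correct and matches the paper's proof essentially step for step. It uses the same center $M_{in}$ (isolated vertex on $Z$, first $\exists$-gadgets encoding the $x$-assignment via crossed/uncrossed completions), reads off the $y$-assignment from the crossing parity of the $\forall$-gadget cycles in the same way, and builds the same flip sequence of cost $2+\ell/2+7(m_1+m_3)+8m_2+3K+L/2$. The one difference is that the paper simply cites Lemma~\ref{lem:lstructural} to restrict to worst-case targets for the fixed $M_{in}$, whereas you plan to re-derive that reduction yourself by a case analysis.
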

		
		\begin{claimproof}
			We construct $M_{in}$ based on the truth assignment of $x_1,...,x_{m_1}$ that is part of a solution of $\psi$. If $x_i$ has a positive truth value, then we match its corresponding variable gadget such that any completion of the matching to an alternating cycle is uncrossed. If $x_i$ has a negative truth value, match the variable gadget such that the cycle has a crossing. Place the isolated vertex on $Z$ and match the remaining vertices to form an arbitrary perfect matching. The existence of such perfect matchings is shown as a part of the proof of Proposition~\ref{prop:sconnected}.
			
			Now for any $M_{tar}$ that maximizes the distance to $M_{in}$, we look at the $\forall$-gadgets. By Lemma~\ref{lem:lstructural} we know that $M_{in}\cup M_{tar}$ forms an alternating cycle on the $\forall$-gadget. If the cycle in the gadget to some $y_i$ is crossed zero times or twice, we translate this to a positive assignment of $y_i$, if it is crossed once, translate to a negative assignment of $y_i$.
			
			Since $\psi$ is a \textsc{YES}-instance of $\exists\forall\exists$-SAT there exists an assignment of $z_1,...,z_{m_3}$ such that combined with the initial assignment of $x_1,...,x_{m_1}$ and the translated assignment of $y_1,...,y_{m_2}$ it holds that $\psi(x_1,...,x_{m_1},y_1,...,y_{m_2},z_1,...,z_{m_3})=1$. We build our flip sequence between $M_{in}$ and $M_{tar}$ around the assignment of the~$z_i$'s.
			
			A flip sequence from $M_{in}$ to $M_{tar}$ looks as follows: Perform flips to make the matchings coincide on $Z$ in up to two flips (since the isolated vertex is already placed on $Z$). Move the isolated vertex along $p$ to $v$ in $\frac{\ell}{2}$ flips. Switch all of the first $\exists$-gadgets using seven flips per gadget and if the isolated vertex is placed next to an unswitched clause gadget, switch it using three flips. Switch all $\forall$-gadgets using eight flips per gadget and if the isolated vertex is placed next to an unswitched clause gadget, switch it using three flips. For every $\exists$-gadget that corresponds to some $z_i$ switch the $\exists$-gadget such that if $z_i$ has a positive truth value, the isolated vertex is placed next to clauses that contain the positive literal $z_i$. However, if $z_i$ has a negative truth value, the isolated vertex should be placed next to clauses that contain the negative literal $\bar{z}_i$. Every switch of a $\exists$-gadget takes again seven flips. In the end, move the isolated vertex along $P$ using $\frac{L}{2}$ flips.
			
			The total length of the flip sequence adds up to $\frac{L+\ell}{2}+7(m_1+m_3)+8m_2+3K+2$.  
		\end{claimproof}		
		
		\begin{restatable}{claim}{ruuck}
			\label{claim:ruuck}
			$\psi$ is a \textsc{YES}-instance of \textsc{$\exists\forall\exists$-SAT} whenever the radius of the flip graph of odd matchings of $G''$ is at most $\frac{L+\ell}{2}+7(m_1+m_3)+8m_2+3K+2$.
		\end{restatable}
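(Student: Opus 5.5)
We argue the converse direction in the same way as Claim~\ref{claim:ruck}, now with one extra quantifier alternation. Assume the radius is at most $\frac{L+\ell}{2}+7(m_1+m_3)+8m_2+3K+2$. Let $M_{in}$ be a center. By Lemma~\ref{lem:lstructural} we may assume that $v_{M_{in}}$ lies on $Z$ and that $M_{in}$ admits, for every gadget, completions to alternating cycles with a maximizing $M_{tar}$. The first step reads off an assignment of $x_1,\dots,x_{m_1}$ from $M_{in}$. On each first $\exists$-gadget, the restriction of $M_{in}$ determines whether every completion to an alternating cycle is uncrossed or crossed; we set $x_i$ to true or false accordingly. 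If $M_{in}$ leaves this undetermined, for instance because it does not induce a cycle-compatible matching there, we set $x_i$ arbitrarily and check that the later argument still goes through.

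For the universal quantifier, fix any assignment of $y_1,\dots,y_{m_2}$. We build a target matching $M_{tar}$ as follows. Its isolated vertex is the far end of $P$, and it differs from $M_{in}$ on $Z$ so that the two extra flips there are forced. It forms an alternating cycle with $M_{in}$ on every clause gadget, every first and second $\exists$-gadget and every $\forall$-gadget. On the $\forall$-gadget of $y_j$, $M_{tar}$ chooses the second crossing so that the total number of crossings is even if $y_j$ is true and odd if it is false. This is possible because, as stated in the strategy discussion, whatever $M_{in}$ does with the first crossing, the second crossing is controlled by $M_{tar}$. Since $M_{in}$ is a center, $d(M_{in},M_{tar})$ is at most the bound.

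Next we lower-bound the cost of every flip sequence from $M_{in}$ to $M_{tar}$ by charging to components. The sequence must spend
\begin{itemize}
\item at least $2$ flips on $Z$,
\item at least $\frac{\ell}{2}$ flips to bring the isolated vertex from $Z$ to $v$,
\item at least $\frac{L}{2}$ flips to move it along $P$,
\item at least $7$ flips per first or second $\exists$-gadget, $8$ flips per $\forall$-gadget, and $3$ flips per clause gadget to switch the respective cycles.
\end{itemize}
These sum exactly to the bound, so every charge is tight. The choice of $\ell$ and $L$ rules out traversing $p$ or $P$ more than once, since a second traversal would cost more than all the gadget work combined. Because there is no slack, every clause gadget must be switched using exactly three flips. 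This forces the isolated vertex to sit next to the clause's attachment vertex at some moment while some variable gadget is being switched in its minimal number of flips. A first $\exists$-gadget or $\forall$-gadget, switched minimally, exposes its attachment vertices only to clauses whose literal agrees with the assignment read off above. For each second $\exists$-gadget, the direction of traversal defines $z_k$. Hence every clause contains a true literal under $(x,y,z)$. As $y$ was arbitrary, the fixed $x$ witnesses that $\psi$ is a \textsc{YES}-instance.

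The main obstacle is the local gadget analysis behind these tight charges. For the $14$-cycle $\forall$-gadget with two crossings we must show three things: (a) an $8$-flip switch exists only along the cycle structure fixed by $M_{in}\cup M_{tar}$; (b) during such a switch the isolated vertex lands next to the positive attachments $4,10$ exactly when the crossing count is even, and next to $5,11$ when it is odd; and (c) $M_{in}$ cannot fix the parity by itself. Likewise we must show that a $7$-flip switch of the first $\exists$-gadget exposes only the literal side determined by the crossing. I would first verify these by case analysis on the entry edge from $v$ and on the first flip, using that any deviation from the minimal sequence adds at least two flips, by observation (3) on happy edges. A secondary subtlety is to justify the reduction to centers satisfying Lemma~\ref{lem:lstructural} without losing generality in the reading of the $x_i$.
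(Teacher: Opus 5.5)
Your proposal is correct and follows the paper's proof essentially step for step: take a center $M_{in}$ normalized via Lemma~\ref{lem:lstructural}, read off $x$ from the first $\exists$-gadgets, and build $M_{tar}$ for an arbitrary $y$ by choosing the crossing parity in each $\forall$-gadget. Then use the tight component charging to force minimal switches, and read off $z$ from the traversal directions of the second $\exists$-gadgets. The one loose end, which the paper shares with its ``complete $M_{in}$ arbitrarily'', is the case where $M_{in}$ admits both completions on a first $\exists$-gadget (e.g.\ the matching with edges $\{1,2\},\{3,4\},\dots,\{11,12\}$ lies on both the crossed and the uncrossed $12$-cycle): there you must fix $M_{tar}$'s completion to agree with the arbitrarily chosen $x_i$, independently of $y$, and this is exactly what your ``check that the later argument still goes through'' requires.
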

		
		\begin{claimproof}
			Let $M_{in}$ be the center of the flip graph of matchings of $G''$. In particular, the flip distance from $M_{in}$ to any other odd matching of $G''$ is at most the radius.
			
			Consider the first set of variable gadgets corresponding to $x_1,...,x_{m_1}$ and see, in which way they are matched. If the gadget corresponding to $x_i$ is matched, such that any completion to an alternating cycle is uncrossed, assign a positive truth value to~$x_i$, if any completion to an alternating cycle is crossed, assign a negative value. It can happen that the matching can be clompeted to both a crossed and uncrossed alternating cycle, in that case assign an arbitrary truth value.
			
			Now, let there be an arbitrary truth assignment of $y_1,...,y_{m_2}$. We construct~$M_{tar}$ based on this assignment. In every $\forall$-gadget, we complete $M_{in}$ to an alternating cycle that has zero or two crossings if the corresponding $y_i$ has a positive assignment, and otherwise to have one crossing if $y_i$ has a negative assignment. On all other gadgets, complete $M_{in}$ arbitrarily to an alternating cycle. Let $M_{tar}$ have a perfect matching on $Z$ such that $M_{tar}$ and $M_{in}$ differ on $Z$. Let $M_{tar}$ have its isolated vertex at the very end of $P$ and complete $M_{tar}$ with a perfect matching on the remainder of $P$ and $p$.
			
			Now take a flip sequence from $M_{in}$ to $M_{tar}$ of length at most $L+\ell+7(m_1+m_3)+8m_2+3K+2$. By our observations on lower bounds it follows that the flip sequence has to spend exactly seven flips on every $\exists$-gadget, eight flips on every $\forall$ gadget and three flips on every clause gadget. We reconstruct an assignment of $z_1,...,z_{m_3}$ from how their corresponding gadgets are switched. Since, every clause gadget has been flipped in the flip sequence, every clause contains at least one literal which has the right truth value assigned by the assignment from our constructions. Therefore, $\psi(x_1,...,x_{m_1},y_1,...,y_{m_2},z_1,...z_{m_3})=1$. Since we can do this for any assignments of $y_1,...,y_{m_2}$, this shows that $\psi$ is a \textsc{YES}-instance of \textsc{$\exists\forall\exists$-SAT}.
		\end{claimproof}
		
		The theorem then follows from the combination of the two claims.
	\end{proof}
	
	\section{Computing the Flip Distance is $\log$-APX-hard}\label{sec:apx}
	
	\begin{figure}[ht]
		\centering
		\includegraphics[scale=0.6]{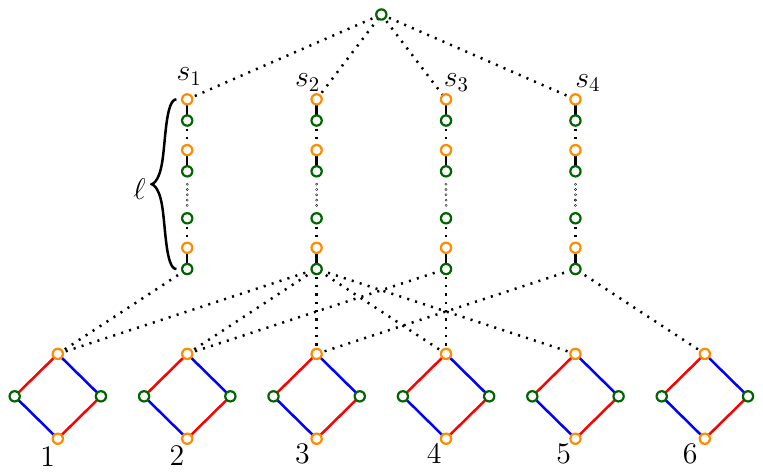}
		\caption{Instance for the hardness-proof. A \textsc{Set Cover} instance with $n=6$, $s_1= \{1\}$, $s_2=\{1,2,3,4,5\}$, $s_3=\{2,4\}$, and $s_4=\{3,6\}$ is reduced. Edges in $M_{in}\setminus M_{tar}$ are colored red, edges from $M_{tar}\setminus M_{in}$ blue, and edges in $M_{in}\cap M_{tar}$ black.}
		\label{fig:apx}
	\end{figure}
	
	The reduction from \textsc{Set Cover} to finding minimum flip sequences is illustrated in Figure~\ref{fig:apx}. For every set in $S$ from the \textsc{Set Cover} instance we add a path with~$\ell=6n+1$ vertices to the graph $G'''$. We construct $M_{in}$ and~$M_{tar}$ such that they form a perfect matching of the paths. For every integer from $1$ to $n$ (that denote the elements within the sets), we add a $4$-cycle to~$G'''$ such that $M_{in}$ and~$M_{tar}$ cover the cycle alternatingly. At last, we add one single vertex to~$G'''$ such that this vertex is the isolated vertex in both $M_{in}$ and $M_{tar}$. For each path corresponding to a set $s_i$ in $S$, we add an edge from one end of the path to the isolated vertex and edges from the other end to all gadgets that correspond to integers that are contained in $s_i$. The strategy for a flip sequence is to traverse a set gadget in order to be able to flip the integer gadgets for integers that are contained in the set. After that the flip sequence has to traverse the set gadget back to the initial isolated vertex. Since $G'''$ is bipartite (see green/orange partition in Figure~\ref{fig:apx}), there are no shortcuts that skip the traversal of set gadgets.
	
	\begin{restatable}{proposition}{NPH}
		\label{prop:np-hard}
		There exists a set cover of size $c$ if and only if $d(M_{in},M_{tar}) \leq c\cdot\ell + 3\cdot n$.
	\end{restatable}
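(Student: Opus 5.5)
We prove the two directions separately. Throughout, let $u$ be the single extra vertex (isolated in both $M_{in}$ and $M_{tar}$). For each set $s_i$ let $Q_i$ be its path, with end $a_i$ adjacent to $u$ and end $b_i$ adjacent to the integer gadgets of $s_i$. The integer gadgets are the $4$-cycles $C_1,\dots,C_n$. Since $\ell=6n+1$ is odd, $Q_i$ has an odd number of vertices. Hence a "perfect matching of the path" has to be read with $u$ attached: $M_{in}=M_{tar}$ on $Q_i$ covers all of $Q_i$ except that the parity works out when the isolated vertex travels through. I will first fix the exact parity conventions from the figure, so that sending the isolated vertex from $u$ through $Q_i$ to $b_i$ costs $(\ell-1)/2$ flips, and returning costs the same, giving $\ell-1$ in total. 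If the count in the statement is exactly $\ell$ per set, the extra unit comes from the entry and exit edges; this is checked carefully against the construction.

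\textbf{Upper bound.} Take a set cover $S'$ with $|S'|=c$ and process each $s_i\in S'$ in turn. Move the isolated vertex from $u$ along $Q_i$ to $b_i$, which shifts all path edges by one. For every still unswitched $C_j$ with $j\in s_i$, switch it in $3$ flips: matching the isolated vertex into $C_j$ and rotating the alternating $4$-cycle returns the isolated vertex to $b_i$, as in Figure~\ref{fig:switch} with two matching edges. Then walk back along $Q_i$ to $u$, which restores $Q_i$ to its original matching. Each $C_j$ is switched exactly once, so the total length is $c\cdot\ell+3n$.

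\textbf{Lower bound.} Take a flip sequence of length at most $c\ell+3n$, and let $T$ be the set of indices $i$ whose path $Q_i$ is touched, that is, has some edge removed. The charging observations from the preliminaries give three facts.
\begin{enumerate}
\item Every $C_j$ is an alternating cycle of $M_{in}\cup M_{tar}$, so at least $3$ flips are charged to it.
\item The isolated vertex can reach a $C_j$ only through some $b_i$ with $j\in s_i$, since the cycles are connected to the rest of the graph only via the $b_i$'s. Hence $\{s_i:i\in T\}$ covers $\{1,\dots,n\}$.
\item For every $i\in T$, the path edges of $Q_i$ are happy edges that are eventually restored. The isolated vertex starts and ends at $u$.
\end{enumerate}
Fact~3 together with bipartiteness is the key point. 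Because $G'''$ is bipartite, the isolated vertex always stays in one colour class, and moving it along a path shifts matching edges one at a time. Once the isolated vertex has entered $Q_i$ at $a_i$ and progressed to its $k$-th vertex, the path is disturbed on a prefix of length $\Theta(k)$. To use $b_i$, the whole path must be disturbed, and restoring it requires the isolated vertex to walk back, or else to leave through $b_i$ and re-enter elsewhere with the path still disturbed, which must be repaired later. In either case at least $\ell$ flips are charged to $Q_i$ (roughly $\ell/2$ to disturb and $\ell/2$ to restore).

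This gives length at least $|T|\ell+3n$. From $|T|\ell+3n\le c\ell+3n$ we get $|T|\le c$, so the sets indexed by $T$ form a set cover of size at most $c$. Padding with further sets if needed yields a cover of size exactly $c$.

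\textbf{Main obstacle.} The delicate step is the $\ell$-flip charge per touched path, and in particular ruling out sequences that enter $Q_i$ from the $b_i$ side via a cycle, or partially traverse several paths. Here I would use the partition of $M\cup M_{tar}$ at every intermediate time. Each touched path contributes an alternating path or cycle component whose size grows by at most one per flip, and it must shrink back to zero, giving at least twice its maximal extent. The choice $\ell=6n+1$ then absorbs any savings from the at most $3n$ integer-gadget flips interleaving with path traversals, so these savings cannot amount to a whole path.
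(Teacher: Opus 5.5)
Your proposal follows the paper's proof. In the forward direction you walk the isolated vertex down each chosen set path, switch each new $4$-cycle in $3$ flips and walk back; in the converse you combine bipartiteness with a charge of $\ell$ flips per used set path and $3$ per integer gadget. Two fixes are needed: let $T$ be the paths whose far end $b_i$ ever holds the isolated vertex (the paper's $S^\ast$), since a merely touched path need not cost $\ell$ flips, and note that because the isolated vertex always stays in $u$'s colour class it can never sit on an $a_i$ or on a $4$-cycle's attachment vertex, so the shortcuts you worry about cannot occur, path flips and gadget flips are disjoint, and (reading each path as having an even number $\ell$ of vertices, as its perfect matching requires) a round trip costs exactly $\ell$ flips with no slack argument needed.
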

	
	\begin{proof}
		Assume there exists a set cover of size $c$. To transform $M_{in}$ to $M_{tar}$ all cycles that correspond to an integer from $1$ to $n$ need to be switched. The only way to switch such a cycle is to flip along a path that corresponds to a set $s_i$ that is incident to $e$ to place the isolated vertex next to the integer gadget and then switch the cycle. At last, add all edges from the path that corresponds to $s_i$ back to the isolated vertex. If we repeat this for all vertices $s_i$ in a set cover, all cycles in integer gadgets will be switched in the end and we reached $M_{tar}$. Adding and removing the happy edges along a set gadget takes $\ell$ flips. Switching an integer gadget takes three flips. So the total length of the flip sequence is $c\cdot \ell + 3\cdot n$.
		
		For the opposite direction, we first need to see that we indeed have to take the detour back to the isolated vertex. Observe that~$G'''$ is a bipartite graph. In Figure \ref{fig:apx} we give a~$2$-coloring of $G'''$ with two colours, orange and green. The isolated vertex can then only be placed on green vertices. This prevents flip sequences from taking any shortcuts between gadgets, by placing the isolated vertex directly on a different path when leaving an integer gadget.
		
		Now assume that we have a flip sequence from $M_{in}$ to $M_{tar}$ of length at most $c\cdot\ell + 3\cdot n$. All cycles in integer gadgets have been switched. In particular, every cycle had the isolated vertex placed next to it at some point of the flip sequence. We set $S^\ast$ to be the set of all sets whose gadget contained the isolated vertex next to an integer gadget at some point. $S^\ast$ is clearly a set cover. By our initial observation, in order to place an isolated vertex in a set gadget next to an integer gadget, we need to charge at least $\ell$ flips towards the set gadget. Therefore, we get that $\lvert S^\ast\rvert \leq \frac{c\cdot\ell + 3\cdot n - 3\cdot n}{\ell} = c$  
	\end{proof}
	
	\APXH*
	
	\begin{proof}
		Assume we could approximate the length of a shortest flip sequence up to a factor of~$r$. Then, we can calculate a flip sequence of length $d\leq rd^\ast$ in polynomial time, where $d^\ast$ denotes the length of the optimal flip sequence. Further, let $c^\ast$ denote the size of an optimal set cover By Proposition~\ref{prop:np-hard} we can obtain a set cover of size
		
		\begin{equation*}
			c = \bigg\lfloor \frac{d- 3n}{\ell}\bigg\rfloor \leq \frac{rd^\ast -3n}{\ell} = \frac{rd^\ast -3r\cdot n + 3 r \cdot n - 3n}{\ell} = rc^\ast + \frac{r}{2}
		\end{equation*}
		
		Next, we need to rewrite the above expression to fit into the definition of an AP-reduction, that is, $c\leq c^\ast (1+\beta(r-1))$.
		
		First assume $(r-1)\geq \frac{1}{2c^\ast+1}$. We estimate (with $\beta = 4$):
		
		\begin{align*}
			c^\ast + \frac{r}{2} &= c^\ast + \frac{1}{2} + (r-1)c^\ast + \frac{r-1}{2} + (r-1)(3c^\ast-3c^\ast)\\
			&= c^\ast + 4(r-1)c^\ast + \frac{1}{2} - (r-1)(3c^\ast - \frac{1}{2}) \leq c^\ast(1+\beta(r-1))		
		\end{align*}
		
		We continue with the second case that $(r-1)<\frac{1}{2c^\ast+1}$.
		
		\begin{equation*}
			rc^\ast + \frac{r}{2} = c^\ast + \frac{1}{2} + (r-1)c^\ast + \frac{r-1}{2} < c^\ast + (r-1)(c\ast+\frac{1}{2}) + \frac{1}{2} < c^\ast + 1
		\end{equation*}
		In the second case, we recover the optimal solution. In the first case, the estimate fulfills the requirement for the AP-reduction.
	\end{proof}
	
	\section{Open Questions}
	We have shown that computing the diameter of the flip graph of combinatorial odd matchings is $\Pi_2^p$-complete and computing the radius of a the flip graph is $\Sigma_3^p$-complete. By doing so, we provide two naturally occuring problems that fall into these complexity classes. Further, by reducing \textsc{Set Cover} to the problem of finding shortest flip sequences between odd matchings we show that the problem is $\log$-\APX-hard and does not admit any global constant factor approximation.
	
	The following open questions arise from our results:
	
	\begin{enumerate}
		\item Since the reconfiguration of matchings is closely tied to complexity results in polytopes we motivate the question whether it is $\Sigma_3^p$-hard to compute the combinatorial radius/center of a polytope. We remark that the result for the diameter of the flip graph in~\cite{wulf2025computingpolytopediameterharder} uses a so called \emph{canonical structure}, that is, a structure that can be reached from any other structure in a reasonable number of flips. An upper bound on the diameter then follows from flipping from any initial structure to the canonical structure and then to the target structure. Therefore, by computing the radius, a center of the flip graph is implicitly also computed. This means that so far computing the radius is only shown to be $\Pi_2^p$-hard.
		\item Can similar results on the complexity of computing the diameter and radius also be shown in a geometric setting? While in the combinatorial setting, we can control, which edges are part of the input graph $G$ and which are not, in the geometric setting all edges between any two points in the plane can theoretically occur in $M_{in}$, $M_{tar}$ or any intermediate matching, the only degree of freedom for constructions is the placement of the points.
	\end{enumerate}
	
	\newpage
	
	\bibliographystyle{plainurl}
	\bibliography{citation}
	
	\newpage
	
		\section{Omitted Details in Section \ref{sec:diameter}}
	
	\connected*
	
	\begin{proof}
		By~\cite[Theorem~3]{aichholzer2025flippingoddmatchingsgeometric} the flip graph of odd matchings of $G'$ is connected if and only if for every edge $e \in E(G')$ one of the following holds:
		\begin{itemize}
			\item Every odd matching of $G'$ contains $e$.
			\item No odd matching of $G'$ contains $e$.
			\item Let $u_1$ and $u_2$ be the vertices incident to $e$. Either $G'-u_1$ or $G'-u_2$ contains a perfect matching.
		\end{itemize}
		We will show that the third condition holds for every edge based on a case distinction.
		
		\begin{description}\item[Case 1: $e$ contains $v$:] $G'-v$ can be partitioned into clause gadgets, $\forall$ gadgets, $\exists$ gadgets and $P$. All of them have a perfect matching. Therefore, $G'-v$ has a perfect matching.
			
			\item[Case 2: $e$ lies entirely on $P$:] Take an odd matching $M$ that has $v$ as its isolated vertex and take a path $\bar{P}$ in $G'$ that alternates between edges in $M$ and edges not in $M$ from $v$ to $u_1$. Without loss of generality, $\bar{P}$ is of even length. Otherwise, make $\bar{P}$ one edge longer or shorter to obtain a path of even length from $v$ to $u_2$. Then taking the symmetric difference $M\Delta E(\bar{P})$ is an odd matching with $u_1$ as isolated vertex.
			
			\item[Case 3: $e$ lies entirely on a variable gadget:] Take an odd matching $M$ that has $v$ as its isolated vertex. The variable gadget that contains $e$ is an even cycle with a perfect matching on it. We perform a flip sequence that switches the matching of the variable gadget. Along this flip sequence, every second vertex of the gadget in order of the cycle will be isolated. Therefore, there exists an intermediate matching of the flip sequence for which a vertex of $e$ is isolated.
			
			\begin{figure}[ht]
				\centering
				\includegraphics[scale=0.6]{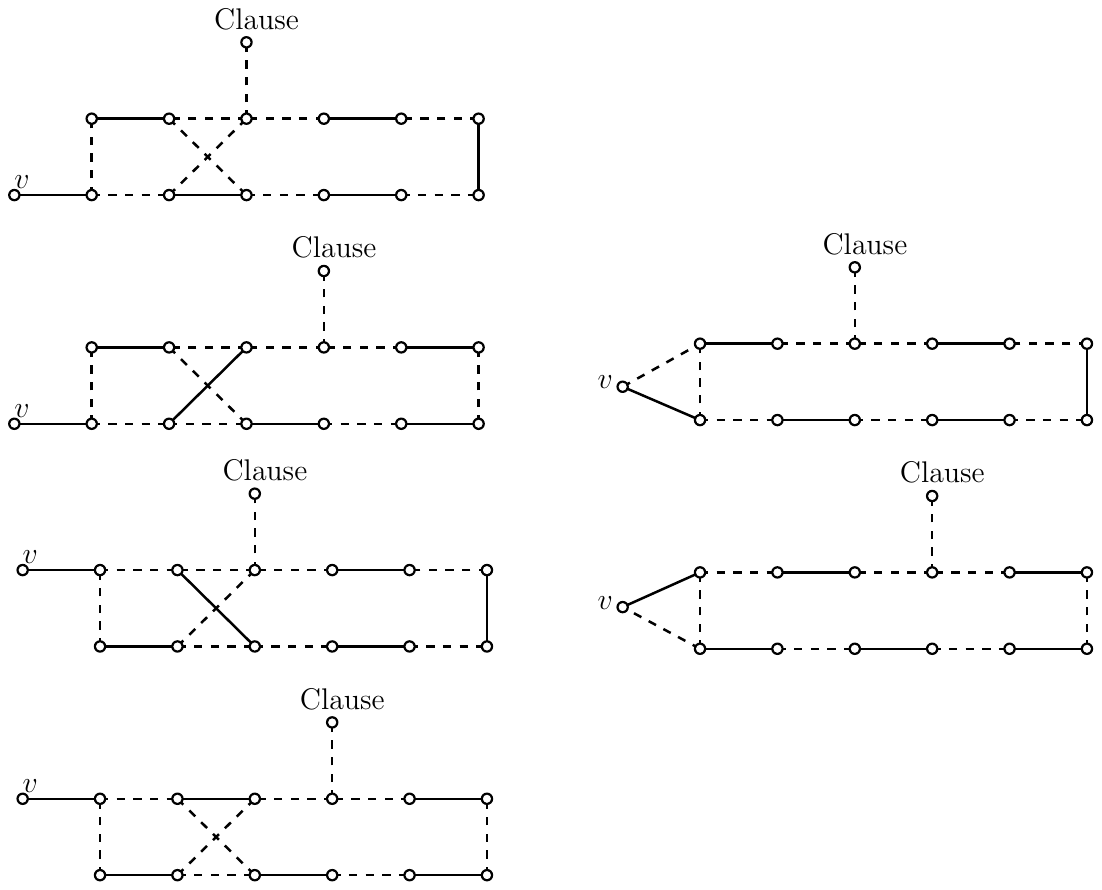}
				\caption{One possible way to complete a matching on a variable variable gadget with any vertex next to a clause gadget removed. Left: $\forall$-gadgets on one side, Right: $\exists$-gadgets.}
				\label{fig:matchings}
			\end{figure} 
			
			\item[Case 4: $e$ connects a vertex of a variable gadget and a vertex of a clause gadget:] Let $u_1$ be the vertex of $e$ that belongs to a variable gadget. $G'-u_1$ contains a perfect matching that has an induced perfect matching on all clause gadgets, on all variable gadgets except the one containing $u_1$ and a perfect matching on $P$. All variants for the variable gadget that contains $u_1$ are covered in Figure \ref{fig:matchings}.
			
			\item[Case 5: $e$ lies entirely on a clause gadget:] Take an odd matching $M$ that has an isolated vertex placed on a vertex of a variable gadget next to the clause gadget as constructed in Case 4. The clause gadget that contains $e$ is an even cycle with a perfect matching on it. We perform a flip sequence that switches the matching of the variable gadget. Along this flip sequence, every second vertex of the gadget in order along the cycle will be isolated. Therefore, there exists an intermediate matching of the flip sequence for which a vertex of $e$ is isolated.
		\end{description}  
	\end{proof}
	
	\structure*
	
	\begin{proof}
		{\textbf{\sffamily{(1)}}} Assume neither $v_{M_{in}}$ nor $v_{M_{tar}}$ lies on $P$, in particular the matchings coincide on $P$. We consider the flip graph of odd matchings on $G'-P$. Any flip sequence from $M_{in}\setminus E(P)$ to $M_{tar}\setminus E(P)$ on $G'-P$ can be embedded to be a flip sequence from $M_{in}$ to $M_{tar}$ in $G'$. The diameter of the flip graph of odd matchings of $G'-P$ is bounded by $c\cdot\lvert V(G'-P)\rvert = c\cdot (s+1)$ and, therefore, $d(M_{in},M_{tar})\leq c\cdot(s+1)$. By the choice of the length $\ell$ simply moving the isolated vertex from $v$ to $w$ takes more flips than that. Therefore, $M_{in}$ and $M_{tar}$ cannot form a maximal pair. If only one of the two matchings has the isolated vertex on $P$ it takes at most $c\cdot(s+1)$ flips to move the isolated vertex away from $P$. Whereas it takes at least $2c\cdot(s+1)$ flips to move the isolated vertex from $v$ to $w$ and back. We conclude that both matchings have the isolated vertex on $P$. Additionally, if for example $M_{in}$ has the isolated vertex on $P$, but not at~$w$, a matching that coincides with $M_{in}$ on all the variable and clause gadgets, but has the isolated vertex closer to $w$ requires strictly more flips to move the isolated vertex to $v$.
		
		{\textbf{\sffamily{(2)}}} Since by (1) $M_{in}$ and $M_{tar}$ have their isolated vertex placed at $w$, the two matchings contain a perfect matching of the vertices of every clause gadget. Note, that if we place the isolated vertex on the clause gadget at some point by adding an edge between a variable gadget and the clause gadget, after removing said edge, the isolated vertex will again be in the same variable gadget. If we remove all flips from a flip sequence that modify clause gadgets which contain happy edges, we obtain a shorter flip sequence with the same initial and target matchings. Taking the same matchings $M_{in}$ and $M_{tar}$ and only changing one of the two matchings on the clause gadget to get an alternating cylce will lead to a larger flip distance. Any shortest flip sequence in the altered matching will take three additional flips on the clause gadget. A shorter flip sequence between the altered matchings would have to locally do better on some other gadget and, therefore in the same gadget with the same set of edges in $M_{in}$ and $M_{tar}$.
		
		{\textbf{\sffamily{(3)}}} In Figure \ref{fig:estates} we portray all possible ways that $M_{in} \cup M_{tar}$ can look like on an $\exists$-gadget. Alternating cycles as seen on the top left and the bottom right can be switched in seven flips. Depending on which edge from $v$ to the gadget is added, the isolated vertices will be placed, such that they are either neighbors to clause gadgets that contain the positive form of $y_i$ or such that they are neighbors to the clause gadgets containing the negative form of $y_i$.
		
		If, however, the variable gadget contains six happy edges, we can place isolated vertices next to clauses containing $y_i$ or $\bar{y_i}$ by removing two happy edges. Which of the two values occurs, again, depends on which edge from $v$ is added to enter the gadget. In this case, we only have to charge a total of four flips towards the gadget to reach the same clause gadgets for which we previously used seven flips.
		
		We conclude that for every pair of matchings that has an alternating cycle in an $\exists$-gadget, a pair of matchings that has happy edges in the same variable gadget can be flipped into one another with three flips less.
		
		\begin{figure}[ht]
			\centering
			\includegraphics[scale=0.6]{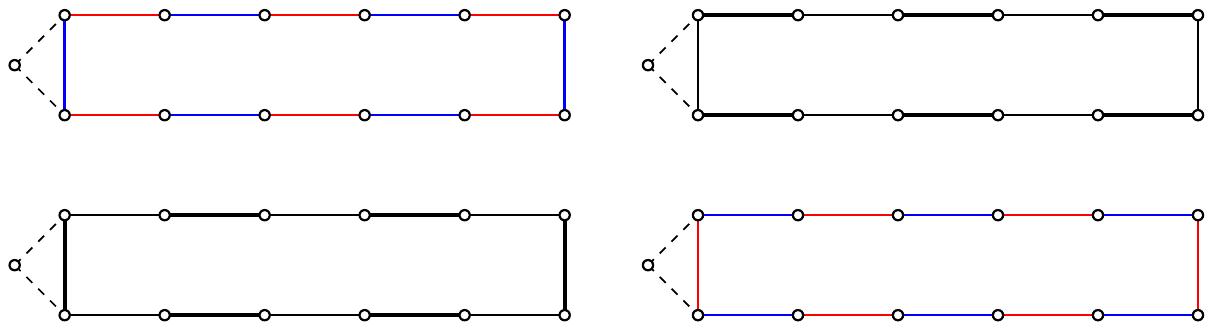}
			\caption{All possible states of the union of $M_{in} \cup M_{tar}$ in an $\exists$-gadget. Edges from $M_{in}\setminus M_{tar}$ are blue, from $M_{tar}\setminus M_{in}$ red and from $M_{in}\cap M_{tar}$ black.}
			\label{fig:estates}
		\end{figure}
		
		{\textbf{\sffamily{(4)}}} In Figure \ref{fig:foralls}, we show all possible states of $M_{in}\cup M_{tar}$ on a $\forall$-gadget corresponding to $x_i$. If $M_{in}\cup M_{tar}$ forms one big alternating cycle with twelve edges, it takes seven flips to switch the gadget. If the cycle is not crossing, the isolated vertex will be placed next to clause gadgets that contain the positive literal $x_i$. If the cycle is crossing, the isolated vertex will be placed next to clause gadgets that contain the negative literal~$\bar{x}_i$. We conclude that we can reach clauses with one of the two truth assignments in seven flips. Placing the isolated vertex next to a clause gadget of the other truth value takes at two flips more, that is, nine flips.
		
		For every case where $M_{in}\cup M_{tar}$ forms a matching with twelve happy edges, there exists a way to remove two happy edges to place the isolated vertex next to clause gadgets of one truth value. Further, we can remove one additional happy edge to place the isolated vertex next to the clause gadget of the other truth value. We conclude that we can reach clauses of one truth value in four flips and clauses with both truth values in $6$ flips.
		
		In the last cases, where $M_{in}\cup M_{tar}$ consists of four happy edges and a small alternating cycle, we need to remove one happy edge in order to then switch the cycle of length four. In the process, the isolated vertex will be placed next to all clause gadgets that contain any literal $x_i$ or $\bar{x}_i$. In these cases, we charge five flips towards the variable gadget.
		
		We conclude, that the extremal cases are the ones, where $M_{in}\cup M_{tar}$ form one big alternating cycle. Whether we want to reach the clause gadgets that we can reach with fewer flips, or the clause gadgets that we can reach with more flips, or clause gadgets with both truth values, in all three cases the big alternating cycle takes the requires the most flips.
		
		\begin{figure}[ht]
			\centering
			\includegraphics[scale=0.6]{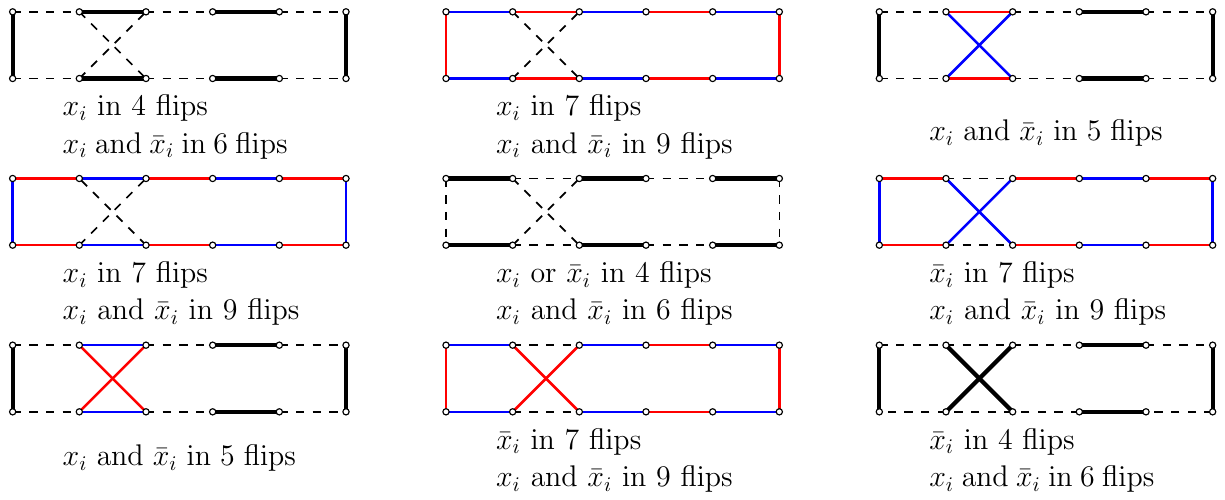}
			\caption{All possible states of the union of $M_{in}$ (blue) and $M_{tar}$ (red) in a $\forall$-gadget.}
			\label{fig:foralls}
		\end{figure}
		
	\end{proof}
	
	\newpage
	
	\section{Omitted Details in Section \ref{sec:radius}}
	
	\connecteds*
	
	\begin{proof}
		The proof is mostly the same as the one of Proposition \ref{prop:connected}. The only thing that changes, is that in Case 4, the gadget can also be one of the new $\forall$-gadgets. In Figure \ref{fig:mat2}, for every removed vertex next to a clause gadget, we can obtain a perfect matching. The connectivity of the flip graph again follows from \cite[Theorem~3]{aichholzer2025flippingoddmatchingsgeometric}.  
		
		\begin{figure}[ht]
			\centering
			\includegraphics[scale=0.6]{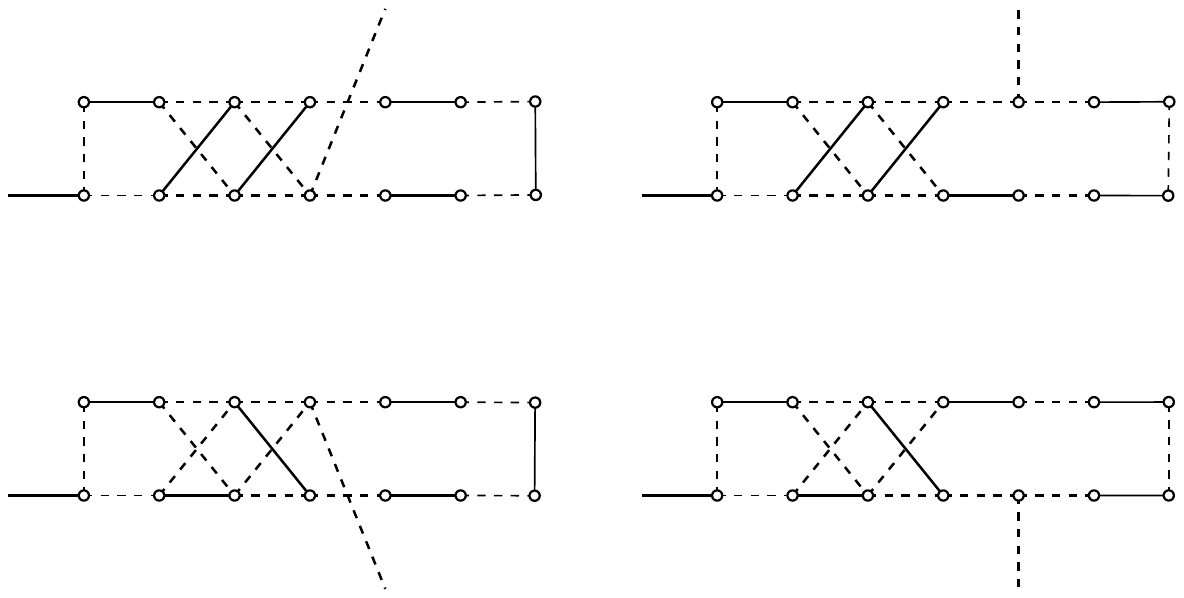}
			\caption{How to complete $\forall$-gadgets to perfect matchings, when one vertex is isolated.}
			\label{fig:mat2}
		\end{figure}
	\end{proof}
	
	\structural*
	
	\begin{proof}
		{\textbf{\sffamily{(1)}}} If $v_{M_{tar}}$ lies on $P$, then consider the flip graph of odd matchings of $G''-P$. Any flip sequence from $M_{in}\setminus E(P)$ to $M_{tar}\setminus E(P)$ on $G''-P$ can be embedded into a flip sequence from $M_{in}$ to $M_{tar}$ in $G''$. The diameter of the flip graph of odd matchings is bounded by $c\cdot\lvert V(G''-P)\rvert = c\cdot (s+\ell+1)$ and therefore $d(M_{in},M_{tar})\leq c\cdot(s+\ell+1)$. By the choice of the length $L$ simply moving the isolated vertex from one end of $P$ to the other takes more flips than that. Therefore, $M_{tar}$ will increase the distance from $M_{in}$ by simply moving its isolated vertex there. Therefore,~$M_{tar}$ has its isolated vertex on $P$. Additionally, if~$M_{tar}$ has the isolated vertex on $P$, but not the  farthest away from $v$, a matching that coincides with $M_{tar}$ on all the variable and clause gadgets, but has the isolated vertex further away from $v$ has larger distance to $M_{in}$.
		
		{\textbf{\sffamily{(2)}}} $v_{M_{in}}$ will not be placed on $P$, because if $M_{tar}$ differs from~$M_{in}$ on any clause or variable gadget, a flip sequence would have to start by putting the isolated vertex to $v$, which can be avoided by placing the isolated vertex not on~$P$ in the first place. Secondly, if $v_{M_{in}}$ is not on $p$ and~$M_{tar}$ differs from $M_{in}$ on $Z$ a flip sequence would have to move the isolated vertex along $p$ to switch $Z$ and back, which requires a total of $\frac{\ell}{2}$ flips, in each direction. By the choice of $\frac{\ell}{2}$ this accounts for more flips than flipping all of $G''-P-p$ together, in each direction. Therefore, to safe $\ell$ flips, $v_{M_{in}}$ will not be placed outside of $p$. Additionally, if~$M_{in}$ has the isolated vertices on $p$, but not on $Z$, a matching that coincides with~$M_{in}$ on all the variable and clause gadgets, but has the isolated vertex closer to~$Z$ has less distance to any~$M_{tar}$ of maximal distance to $M_{in}$.
		
		{\textbf{\sffamily{(3)}}} Follows analogously to Case (2) in Lemma \ref{lem:structure}. To see this, note that first~$M_{in}$ is fixed and only then $M_{tar}$ gets picked to maximize the distance from $M_{in}$.
		
		{\textbf{\sffamily{(4)}}} For the Second $\exists$-gadget, we refer to Figure \ref{fig:estates} and the analysis thereof in Case (3) of Lemma \ref{lem:structure}. $M_{in}$ will choose one perfect matching of the gadget and~$M_{tar}$ will choose the other perfect matching in order to make any shortest flip sequence charge seven flips towards the gadget instead of just four flips. The same holds for the First $\exists$-gadget, which has been analyzed in Case (4) of Lemma~\ref{lem:structure} with the help of Figure \ref{fig:foralls}.
		
		It remains to analyze the $\forall$-gadget. All possible unions of $M_{in}$ and $M_{tar}$ are depicted in Figure \ref{fig:sforall2}. The top row shows all alternating cycles with exactly one crossing. Switching those needs eight flips and in the process the isolated vertex will be placed next to clauses that contain the negated literal $\bar{y}$. Further, we can reach clauses that contain $y$ be spending an addtional two flips. The second row shows all alternating cycles with zero or two crossings. Switching those cycles requires eight flips and in the process the isolated vertex will be placed next to clauses that contain the positive literal $y$. Further, $\bar{y}$ can be reached by spending an additional two flips.
		
		In the bottom row all cases where $M_{in} \cup M_{tar}$ consists of only happy edges are depicted. In all of the cases, the isolated vertex can either be placed next to clauses containing the literal $y$ or $\bar{y}$ and this happens in four to six flips. Again, with additional two flips, clauses with variables of the other truth value can be reached. Further observe, that the choice of $M_{tar}$ does not change which clauses can be switched, it only changes the amount of flips needed to do so. Therefore, states from the bottom row will not occur in a pair of matchings realizes the radius, since they require less flips than the states in the first two rows while simultaneously granting more choice with regard to the assignment of the variable.
		
		The third row contains all cases where $M_{in} \cup M_{tar}$ consists of four happy edges and an alternating four-cycle. In all of these cases, the isolated vertex can either be placed next to clauses containing the literal $y$ or $\bar{y}$ and this happens in five to seven flips, two for removing and adding a happy edge, three for switching the cycle and possibly another two for removing and adding another happy edge. By spending two additional flips, or non at all, we can place the isolated vertex next to clauses of that contain literals of the other truth value as well. Further observe that the choice of $M_{tar}$ does not change which clauses can be switched. It only changes the amount of flips needed to do so. Therefore, states from the third row will also not occur in a pair of matchings that realizes the radius, since they require less flips than the states in the first two while simultaneously granting more choice with regard to the assignment of the variable.  
		
		\begin{figure}[ht]
			\centering
			\includegraphics[scale=0.6]{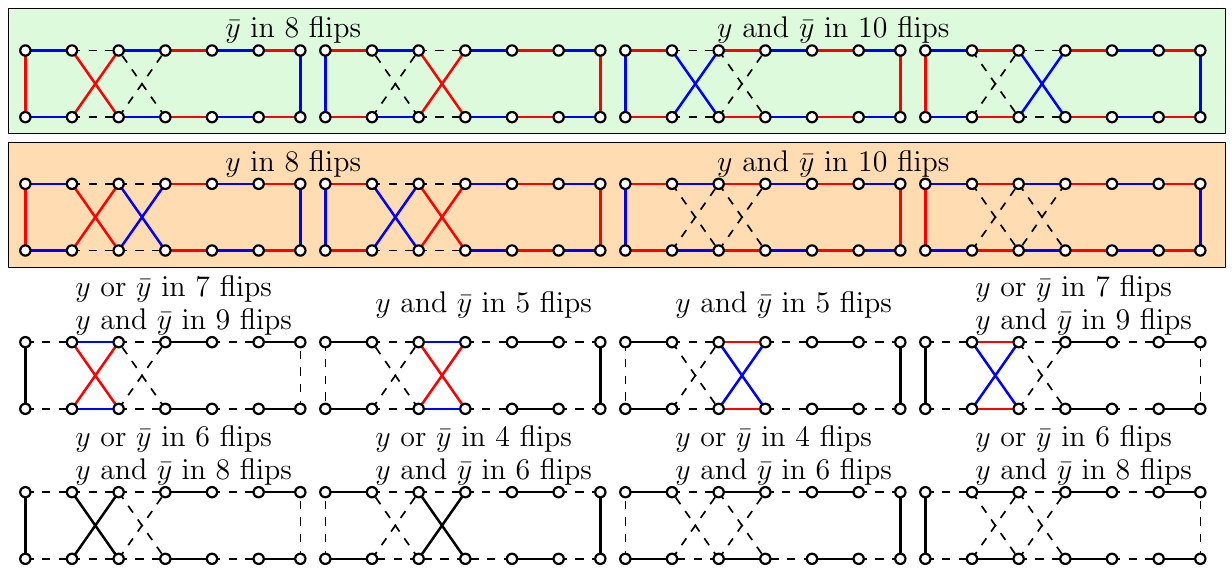}
			\caption{All states of the $\forall$-gadget.}
			\label{fig:sforall2}
		\end{figure}
	\end{proof}
	
\end{document}